\DeclareMathAlphabet{\mathbit}{OML}{cmr}{bx}{it}
\DeclareMathAlphabet{\mathsf}{OT1}{cmss}{m}{n}
\DeclareMathAlphabet{\mathTXf}{OT1}{cmss}{bx}{it}
\DeclareMathOperator{\GDoF}{GDoF}
\DeclareMathOperator{\CN}{\mathcal{N}_{\mathbb{C}}}
\newcommand{\APZF}{\mathrm{APZF}}
\DeclareMathOperator{\SNR}{SNR}
\newcommand{\bH}{\mathbf{H}}
\newcommand{\bh}{\bm{h}}
\newcommand{\bt}{\bm{t}}
\newcommand{\LB}{\left(}
\newcommand{\RB}{\right)}
\newcommand*{\dotleq}{\mathrel{\dot{\leq}}}
\newcommand*{\dotgeq}{\mathrel{\dot{\geq}}}
\newcommand{\E}{{\mathbb{E}}}
\newcommand{\He}{{{\mathrm{H}}}}
\newcommand{\BC}{{\text{BC}}}
\newcommand{\xv}{\mathbf{x}}
\newcommand{\Cb}{{{\mathbb{C}}}}
\newcommand{\Herm}{{{\mathrm{H}}}}
\newcommand{\mformtab}{{{\ \ \ \ \ \ \ }}}
\newcommand{\normv}[1]{\left\|#1\right\|_2}
\newcommand{\abs}[1]{\left|#1\right|}
\theoremstyle{example}
\theoremstyle{assumption}
\newtheorem{theorem}{Theorem}
\newtheorem{lemma}{Lemma}
\newtheorem{corollary}{Corollary}
\newtheorem{remark}{Remark}
\begin{document}

\title{Generalized Degrees-of-Freedom of the 2-User Case MISO Broadcast Channel with Distributed CSIT} 

\author{
      \IEEEauthorblockN{Antonio Bazco\IEEEauthorrefmark{1}\IEEEauthorrefmark{2}, Paul de Kerret\IEEEauthorrefmark{2}, David Gesbert\IEEEauthorrefmark{2}, Nicolas Gresset\IEEEauthorrefmark{1}
}\\
\IEEEauthorblockA{
      \IEEEauthorrefmark{1} Mitsubishi Electric R\&D Centre Europe (MERCE), Rennes, France\\
      \IEEEauthorrefmark{2} Communication Systems Department, EURECOM\\
}} 
\maketitle
\begin{abstract}
This work\footnote{D. Gesbert and P. de Kerret are supported by the European Research Council under the European Union's Horizon 2020 research and innovation program (Agreement no. 670896).} analyses the Generalized Degrees-of-Freedom (GDoF) of the 2-User Multiple-Input Single-Output (MISO) Broadcast Channel (BC) in the so-called Distributed CSIT regime, with application to decentralized wireless networks. This regime differs from the classical limited CSIT one in that the CSIT is not just noisy but also imperfectly shared across the transmitters (TXs). 
Hence, each TX precodes data on the basis of  local CSIT and statistical quality information at other TXs. We derive  the GDoF result and obtain the surprising outcome that by specific accounting of the pathloss information, it becomes possible for the decentralized precoded network to reach the same performance as a genie-aided centralized network where the central node has obtained the estimates of both TXs. The key ingredient in the scheme is the so-called Active-Passive Zero-Forcing (AP-ZF) precoding, which lets the precoder design adapt optimally with respect to different local CSIT qualities available at different TXs.
 
 
\end{abstract}

 \section{Introduction}\label{se:intro}
Simultaneous transmission between multiple-antennas TXs 
towards different receivers (RXs) has been widely studied, typically assuming a \emph{Centralized CSIT} setting, where only one channel estimate, possibly a noisy one, is used for calculating the precoding coefficients\cite{Jindal2006},\cite{Caire2007}. This can also model a joint transmission from different non-colocated TXs in the case where the CSIT is \emph{perfectly shared} among the TXs over a so-called ideal Cloud Radio Access Network (C-RAN) \cite{Bangerter2014}.

However, future wireless network topologies will also include heterogeneous scenarios, with a variety of devices, such as user terminals, drone-enabled relays, pico base stations, etc., seeking to cooperate for transmission despite the lack of an ideal backhaul linking them.  Other scenarios featuring existing backhaul links may favor local processing over centralized one in order to meet  the tight latency constraints derived from 5G and tactile internet applications \cite{Simsek2016}. In these cases, a full CSI sharing across TXs is not always desired, and there is a need for robust processing on the basis of locally available CSI. 

In this paper, we formalize this scenario under the \emph{Distributed CSIT} label, which refers to each TX being endowed with its own version of the multi-TX multi-user channel state matrix, with possibly different qualities. While it was suggested in the past literature that Distributed CSIT scenarios can severely impact on performance in comparison with classical limited-yet-centralized CSIT ones \cite{dekerret2012_TIT}, a crucial and interesting problem is how TXs can cooperatively combat the lack of mutual CSI consistency in order to reduce the gap with respect to the centralized system performance. 

Several works have focused on this Distributed CSIT setting \cite{Zakhour2010a}, e.g., analyzing Interference Alignment performance \cite{dekerret2014_TWC} or studying the Regularized Zero-Forcing performance in the large system limit \cite{Li2015}. However, many of the issues and challenges introduced by this setting are still open problems. 
It has been shown in \cite{dekerret2016_ISIT_journal} that for the $2$-user MISO BC the Distributed CSIT setting achieves the Degrees-of-Freedom (DoF) of the Centralized CSIT setting. Having instantaneous and imperfect CSIT, with an error power scaling as $P^{-\alpha},\ \alpha\in[0,1]$, it is possible to achieve a DoF of $1+\alpha$. This optimal DoF is reached due to a new asymmetrical precoding scheme, so-called Active-Passive Zero-Forcing (AP-ZF), where the most informed TX is able to resolve the error created by the less informed one.

Nevertheless, the DoF is a limited figure of merit, since it does not take into account the differences between channel strengths. In order to study the impact of the network topology, the Generalized DoF (GDoF) concept was introduced in \cite{Etkin2008}. $\GDoF$ approach offers an intermediate step towards finite and constant gap analysis \cite{Davoodi2015_globecom}, modeling the pathlosses through a dependence in $P$\cite{dekerret2014_TIT}. 
In \cite{Davoodi2016_ISIT_journal_BC} the $\GDoF$ for K-user Symmetric MISO BC with Centralized CSIT has been characterized, and it has been shown that for the $2$-user case the $\GDoF$ only depends on the worst CSIT accuracy towards each RX.

In this work our key contributions are three fold:  First we establish the $\GDoF$ performance of the $2$-user MISO BC under Distributed CSIT for the case where one TX has better CSI quality for all the links. Second we propose a scheme achieving the $\GDoF$, built on the principle of AP-ZF precoding, which is based on the idea that each TX should precode data according to the quality with which it sees CSI. 
Third we show that accounting for pathloss difference in the multi-user channels, the decentralized network can reach the same performance as a genie-aided centralized network where the best CSI estimate is shared. 



\emph{Notations:} $\doteq$ denotes the \textit{exponential equality}, i.e., $f(P)\doteq P^{\beta}$ denotes $\lim_{P\rightarrow\infty} \frac{\log\left(f\left(P\right)\right)}{\log\left(P\right)} =\beta$. The \textit{exponential inequalities} $\dotleq$ and $\dotgeq$ are defined in the same manner. $\|\mathbf{A}\|_F$ denotes the Frobenius norm of the matrix $\mathbf{A}$. $\normv{\bt}$ denotes the ${L}^2$-norm of the vector $\bt$, and $\abs{x}$ is the absolute value of the scalar $x$. We define $\bar{i} \triangleq i \pmod 2 + 1$ for $i,\bar{i}\in\{1,2\}$. Being $x$ a number, we define
\begin{align}
		(x)^+ \triangleq \max(x, 0).
\end{align}

\section{System Model}\label{se:SM}

\subsection{$2$-User MISO BC Transmission Model} 
This work considers a communication system where $2$ single-antenna TXs jointly serve $2$ single-antenna RXs over a MISO BC. We assume that the RXs have perfect, instantaneous CSI.
The signal received at RX~$i$ is written as
\begin{align}
			y_i=\bh^\He_i\xv+z_i,
			\label{eq:SM_1}
\end{align}
where $\bh^\He_i\in \Cb^{1\times 2}$ is the channel to user~$i$ and $z_i\in \mathbb{C}$ is the additive Gaussian noise at RX~$i$, distributed in an  independently and identically distributed (i.i.d.) manner as~$\CN(0,1)$. 
$\xv\in \Cb^{2\times 1}$ is the multi-TX transmitted multi-user signal which fulfills the power constraint 
\begin{align}
			\normv{\xv}^2 \doteq P.
\end{align}
$\xv$ is generated from the information symbols $s_i$, which are assumed to be distributed in an i.i.d. manner as~$\CN(0,1)$.
The channel is assumed to be drawn from a continuous ergodic distribution such that all the channel matrices and all their sub-matrices are almost surely full rank \cite{dekerret2011_ISIT}\cite{Davoodi2016}.
  
The relative strength of the elements of the channel matrix $\bH \triangleq [\bh_1,\bh_2]^\Herm$ is modeled as a function of $P$. Given $P$, the nominal $\SNR$ for the scenario without pathloss, it holds that
	\begin{align}
 		 \abs{\bH_{i,k}} \doteq \sqrt{P^{\gamma_{i,k}-1}}, \mformtab \forall i,k\in\{1,2\}. \label{eq:channel_loss}
  \end{align}  
 where $\gamma_{i,k}\in \left[0,1\right]$.        
\begin{remark}
For $\gamma_{i,k}=1,\forall i,k\in\{1,2\}$, we recover the conventional DoF setting, while choosing $\gamma_{i,i}=1$, $\gamma_{i,k}=0$, $\forall i,k\in\{1,2\}\mid k\neq i$ we recover the results of the non-interfering IC.\qed
\end{remark}
  
\begin{figure}[h]
 \centering
        \includegraphics[trim = 0mm 100mm 0mm 40mm, clip,width=0.6\textwidth]{./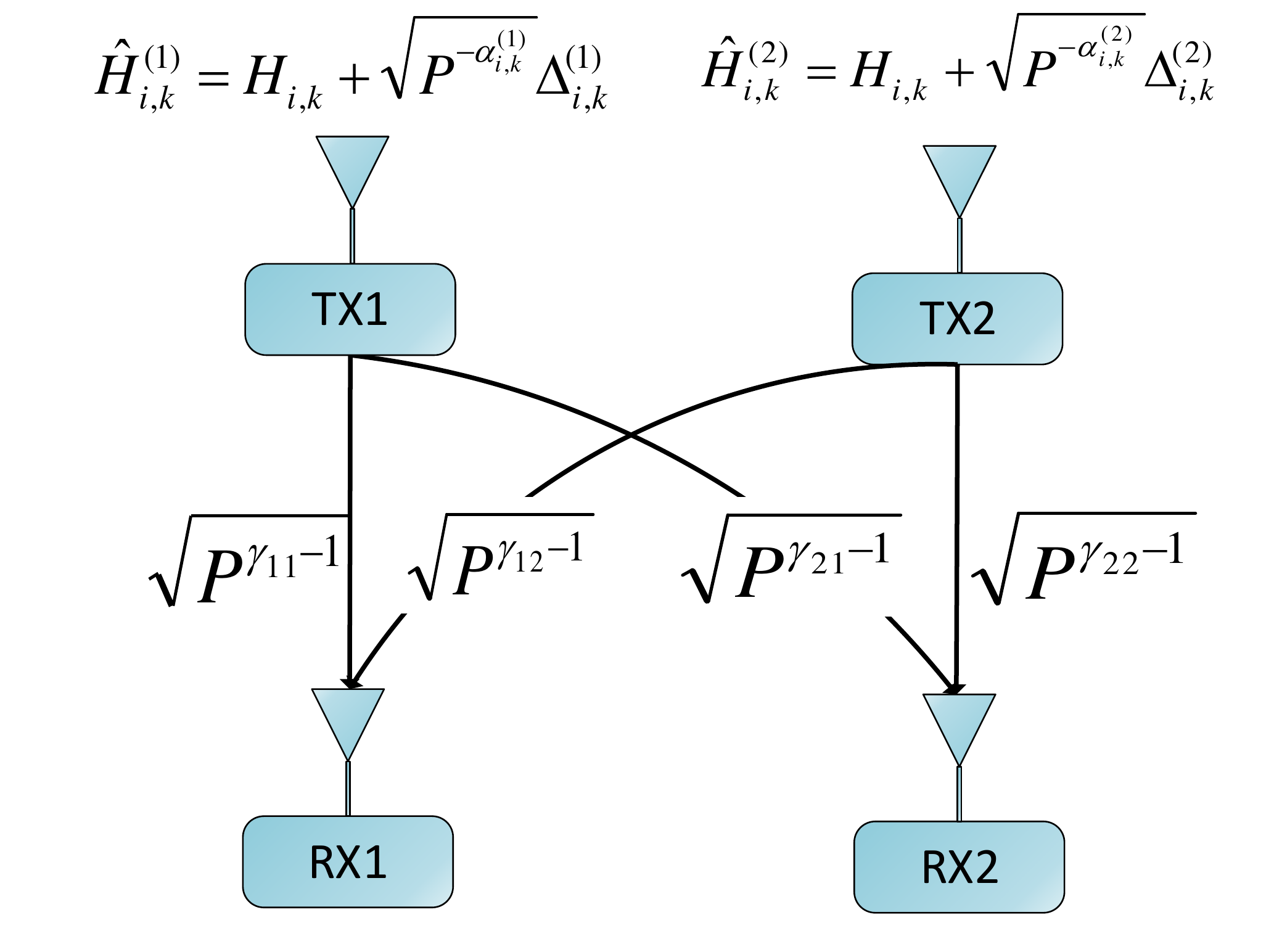}
        \caption{2-user MISO BC System Model with Distributed CSIT.}
		\label{fig:setting}
\end{figure}
The $\GDoF$ approach is a model for the transmission at finite SNR, where the pathlosses are taken into account through a dependence in $P$. For more details, please see \cite{Davoodi2015_globecom},\cite{dekerret2014_TIT}.
\FloatBarrier
\subsection{Distributed CSIT Model}
In that Distributed CSIT setting \cite{dekerret2011_ISIT}, each TX receives a different estimate of the channel, with possibly different accuracies. The CSI uncertainty at the TX~$j$ is modeled as
			\begin{equation}
						\hat{\bH}^{(j)}_{i,k} \triangleq \bH_{i,k}+\sqrt{P^{-\alpha^{(j)}_{i,k}}}\bm{\Delta}^{(j)}_{i,k},  \mformtab \forall j \in \{1,2\},
						\label{eq:distributed model}
			\end{equation} 
where $\bm{\Delta}^{(j)}_{i,k}$ are independent random variables with zero mean and bounded covariance matrix satisfying $\abs{\bm{\Delta}^{(j)}_{i,k}} \doteq \sqrt{P^{\gamma_{i,k}-1}}$, $\forall i,k$.
The \emph{CSIT quality exponent} at TX~$j$ is denoted as $\alpha^{(j)}_{i,k} \in [0,\gamma_{i,k}]$  and it is used to parameterize the accuracy of the current CSIT. 
Note that from a GDoF perspective, $\alpha^{(j)}_{i,k}$ can be limited to $\alpha^{(j)}_{i,k} \in [0,\gamma_{i,k}]$. This comes from the fact that, in terms of GDoF, an estimation with error scaling as~$P^{-1}$ can be intuitively understood as being perfect while an estimation with error scaling as~$P^{0}$ is not helpful \cite{Davoodi2015_globecom}. We assume that TX~$1$ is the most informed TX throughout the work, i.e., 
\begin{align}
	1 \geq \alpha^{(1)}_{i,k} \geq \alpha^{(2)}_{i,k} \geq 0 . \label{eq:alphas_order}
\end{align}
The more-informed TX assumption is key to the optimality of AP-ZF. Extending the results to the arbitrary CSIT regime is an interesting research topic currently under investigation.

In addition, we assume that the conditional probability density functions verify that
    \begin{align}
		\E \big[ \|\bH_{i,k} - \E[\bH_{i,k}|\hat{\bH}^{(1)}_{i,k},\hat{\bH}^{(2)}_{i,k}]\| \big] = O(P^{\max_{j\in\{1,2\}}\alpha^{(j)}_{i,k}}).\label{eq:cond_prob}
    \end{align} 
This technical condition extends the statement from \cite{dekerret2016_ISIT_journal},~\cite{Davoodi2016} and it is satisfied for the usually assumed distributions.

\subsection{Generalized Degrees-of-Freedom Analysis}
			 The optimal sum $\GDoF$ in the MISO BC scenario with imperfect current CSIT is defined as \cite{Etkin2008}
			\begin{equation}
			\GDoF^{\star} \triangleq \lim_{P\rightarrow \infty}\frac{\mathcal{C}(P)}{\log_2(P)},
			\label{eq:sysmod_gdof}
			\end{equation}
where $\mathcal{C}(P)$ denotes the sum capacity \cite{Cover2006} of the MISO BC studied.

\section{Preliminary: Results of the Centralized CSIT Case}\label{se:prev}

\subsection{Centralized CSIT Model}\label{subse:prev_model}
We now focus on the Centralized CSIT configuration. This setting is useful as point of reference for the analysis of the effect of the discrepancies between TXs that appear in the Distributed CSIT setting.
In this centralized setting all the TXs share the exact same, potentially imperfect, channel estimate. Hence, there is a single channel estimate such that we can remove the TX index and consider simply $\hat{\bH}$. Thus, the CSI uncertainty at the TXs is modeled as
 			\begin{equation}
             	\hat{\bH}_{i,k}=\bH_{i,k}+\sqrt{P^{-\alpha_{i,k}}}\bm{\Delta}_{i,k}.					\label{eq:centralized_model}
 			\end{equation}
This setting models the scenario where the precoding is done at a central node or where the CSIT information is \emph{perfectly shared} between the TXs. 
\subsection{Generalized Degrees-of-Freedom of the Centralized CSIT Setting}\label{subse:gdof_centr}
The GDoF of the $2$-user MISO BC with Centralized CSIT has been derived in \cite{Davoodi2016_ISIT_journal_BC}. 
We provide in the following their main result for the setting considered in this work.
\begin{theorem}\cite{Davoodi2016_ISIT_journal_BC} \label{theo:centralized_outerbound}
			In the $2$-user MISO BC with Centralized CSIT the optimal sum $\GDoF$, denoted as $\GDoF^{CCSIT}(\alpha)$, satisfies  
			\begin{align}
				\GDoF^{CCSIT}(\{\alpha_{i,k}\}_{i,k\in\{1,2\}})=\min(D_1,D_2), \label{eq:centralized_gdof_min}
			\end{align}
			where 
			\begin{align}
				D_1  &\triangleq \max \left(\gamma_{1,2}, \gamma_{1,1} \right) + \max \left( (\gamma_{2,1} - \gamma_{1,1} + \min(\alpha_{1,1},\alpha_{1,2})) ^+  ,(\gamma_{2,2} - \gamma_{1,2} + \min(\alpha_{1,1},\alpha_{1,2})) ^+ \right), \nonumber\\
				D_2  &\triangleq \max \left(\gamma_{2,2}, \gamma_{2,1} \right) + \max \left( (\gamma_{1,1} - \gamma_{2,1} + \min(\alpha_{2,1},\alpha_{2,2})) ^+  ,(\gamma_{1,2} - \gamma_{2,2} + \min(\alpha_{2,1},\alpha_{2,2})) ^+ \right).\nonumber
			\end{align}				
\end{theorem}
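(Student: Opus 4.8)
The plan is to prove $\GDoF^{CCSIT}=\min(D_1,D_2)$ by establishing a matching converse (outer bound) and achievability (inner bound), exploiting the symmetry of the expression: $D_2$ is obtained from $D_1$ by swapping the user indices $1\leftrightarrow 2$, so I would carry out the full argument for the bound $\GDoF^{CCSIT}\le D_1$ and then invoke symmetry for $D_2$, while the achievability scheme is designed to meet $\min(D_1,D_2)$ directly.

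For the converse I would combine a genie-aided reduction with the Aligned Image Sets (AIS) technique of Davoodi--Jafar. Starting from Fano's inequality, $n(R_1+R_2)\le I(W_1;y_1^n)+I(W_2;y_2^n)+n\epsilon_n$, I would hand message $W_2$ to RX~1 as genie side information and use the independence $W_1\perp W_2$ to collapse the sum-rate into a single differential-entropy gap of the form $h(y_1^n\mid W_2)-h(y_2^n\mid W_1,W_2)$ plus point-to-point terms. The leading term $\max(\gamma_{1,1},\gamma_{1,2})$ is then the single-user bound to RX~1, which it attains by beamforming along its stronger direct link (recall that link $(i,k)$ delivers received power $\doteq P^{\gamma_{i,k}}$ at full transmit power). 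The delicate second term is controlled by the AIS combinatorial lemma: because the precoder is built from a single estimate $\hat{\bH}$ known only to resolution $P^{-\alpha_{i,k}}$, the number of transmit codewords that collapse (align) at one receiver while remaining separable at the other is bounded by a power of $P$ whose exponent is governed by the \emph{weakest} accuracy toward RX~1, i.e.\ $\min(\alpha_{1,1},\alpha_{1,2})$, together with the strength offsets $\gamma_{2,k}-\gamma_{1,k}$. Translating this count into an entropy deficit yields exactly the $\big(\gamma_{2,k}-\gamma_{1,k}+\min(\alpha_{1,1},\alpha_{1,2})\big)^+$ contributions, and hence $D_1$.

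For achievability I would use rate-splitting with superposition coding over the shared estimate $\hat{\bH}$. Each user's message is split into a common layer, decoded by both RXs and transmitted at the power level dictated by the stronger direct links, and a private layer that is zero-forced using $\hat{\bH}$. Since $\hat{\bH}$ is common to both TXs, the ZF beam toward RX~$\bar i$ suppresses the residual cross-link to an interference floor whose exponent is set by $\gamma_{i,k}-\alpha_{i,k}$, so the private stream to RX~$i$ sees its cross-interference buried at or below the residual-plus-noise level. I would then choose the power exponents of the common and private layers so that (i) each common layer is decodable at both RXs after treating the private layers as noise, and (ii) the private streams achieve the incremental GDoF. Taking the SINR exponents to the limit and optimizing the power split recovers $\min(D_1,D_2)$, matching the converse.

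I expect the main obstacle to be the converse, and specifically the AIS step: making the image-counting argument precise under \emph{heterogeneous} pathlosses is considerably harder than in the symmetric case $\gamma_{i,k}=1$, because the transmit signal must first be partitioned into power sub-levels (a deterministic-model style approximation) before the alignment count can be applied link-by-link. Assembling the resulting exponents into \emph{exactly} $\max(\gamma_{1,1},\gamma_{1,2})$ plus the $(\cdot)^+$ terms, rather than a looser bound, is the crux. By comparison, the achievability power allocation is conceptually routine, though it still demands careful GDoF bookkeeping to confirm that the asymmetric strengths $\gamma_{i,k}$ do not create uncounted interference leakage at either receiver.
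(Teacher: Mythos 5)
First, be aware that the paper does not prove this statement at all: Theorem~\ref{theo:centralized_outerbound} is imported by citation from the work of Davoodi and Jafar, and the only in-paper commentary is the remark that the optimal sum $\GDoF$ is achieved by superposition coding and ZF precoding. So there is no in-paper proof to compare against; the relevant comparison is with the proof in that cited reference.

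Measured against that reference, your outline names the correct machinery: an Aligned Image Sets (AIS) converse combined with rate-splitting/superposition achievability using zero-forced private layers is indeed how this result is established. However, what you have written is a plan rather than a proof, and the step you defer is exactly the content of the theorem. On the converse side, everything hinges on the AIS counting lemma adapted to heterogeneous link strengths $\gamma_{i,k}$ and finite-precision CSIT; without the deterministic-model discretization into power levels, the alignment-counting argument itself, and its translation into a bound on the relevant entropy difference, you cannot conclude that the residual term in $D_1$ equals $\max\big((\gamma_{2,1}-\gamma_{1,1}+\min(\alpha_{1,1},\alpha_{1,2}))^+,(\gamma_{2,2}-\gamma_{1,2}+\min(\alpha_{1,1},\alpha_{1,2}))^+\big)$ rather than something strictly larger --- you concede this yourself by calling it ``the crux.'' On the achievability side, you assert that optimizing the common/private power split recovers $\min(D_1,D_2)$, but you never exhibit the power and rate exponents or verify decodability at both receivers; as the appendix of this very paper illustrates for the distributed variant, that bookkeeping (the choice of the rate parameter $\rho$, the extra non-interfering layer placed below the weaker receiver's noise floor, and the check of every received-power exponent) constitutes the entire argument, not a routine afterthought. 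In short: right family of techniques, matching the cited source, but the two pivotal components --- the AIS lemma under heterogeneous $\gamma_{i,k}$ and the explicit layered power allocation --- are invoked by name and left unproved, so the proposal does not yet establish the theorem.
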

Interestingly, depending the network geometry the pathlosses can be either advantageous (since they reduce the interference power received) or detrimental (since they reduce the intended signal power received in the same level that the interference). Moreover, the $\GDoF$ performance is only depends on the weakest CSIT parameter for each receiver. 
For ease of notation, we introduce the short-hand notations 
    \begin{align}
		\alpha_1 & \triangleq \min \left(\alpha_{1,1},\alpha_{1,2}\right), \\
		\alpha_2 & \triangleq \min \left(\alpha_{2,1},\alpha_{2,2}\right). 
	\end{align}
\begin{remark} 
		This optimal sum GDoF is achieved by superposition coding and ZF precoding \cite{Jindal2006}\cite{Hao2015a}. \qed
\end{remark}
\subsection{Centralized Outerbound}\label{subse:centr_outer}
As a first step toward characterizing the GDoF, we extend the centralized outerbound derived in \cite{dekerret2016_ISIT_journal} for the conventional DoF to the $\GDoF$ setting.
\begin{theorem}\label{theo:centralized_outerbound_proof}
In the $2$-user MISO BC with Distributed CSIT, the optimal $\GDoF$ is upperbounded by the $\GDoF$ of a Centralized CSIT scenario in which all the TXs estimations are perfectly shared. Concretely, it holds that
	\begin{align}
								\GDoF^{DCSIT}(\{\alpha^{(j)}_{i,k}\}_{i,j,k\in\{1,2\}}) \leq \GDoF^{CCSIT}(\{\max_{j\in\{1,2\}}\alpha^{(j)}_{i,k}\}_{i,k\in\{1,2\}}).
	\end{align}
\end{theorem}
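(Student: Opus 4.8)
The plan is to use a genie-aided enhancement together with the centralized converse of Theorem~\ref{theo:centralized_outerbound}. First I would introduce a genie that reveals to \emph{both} transmitters the entire collection of estimates $\{\hat{\bH}^{(1)},\hat{\bH}^{(2)}\}$. Any encoder admissible in the original distributed setting, in which TX~$j$ precodes using only its local $\hat{\bH}^{(j)}$, remains admissible after this enhancement, since a transmitter may always discard the extra estimate. Enlarging the information available to the encoders cannot reduce the sum capacity, so
\[
\GDoF^{DCSIT}(\{\alpha^{(j)}_{i,k}\}) \leq \GDoF^{\mathrm{genie}}(\{\alpha^{(j)}_{i,k}\}),
\]
and it remains to upper-bound the right-hand side.

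After the enhancement all transmitters share exactly the same side information $\{\hat{\bH}^{(1)},\hat{\bH}^{(2)}\}$, so the genie-aided system is centralized in the sense of Section~\ref{subse:prev_model}: a single (now pair-valued) estimate drives the joint precoder. I would then extend the centralized converse of \cite{Davoodi2016_ISIT_journal_BC} (Theorem~\ref{theo:centralized_outerbound}) to this system. In that converse the only way the CSIT enters the bound is through the residual uncertainty about each link after conditioning on the available estimate, the natural summary of which is the conditional-mean estimator $\E[\bH_{i,k}\mid \hat{\bH}^{(1)}_{i,k},\hat{\bH}^{(2)}_{i,k}]$. The role of assumption~\eqref{eq:cond_prob} is precisely to control this residual for the fused, two-estimate observation at the level set by the more accurate of the two estimates, so that the bound reduces to one governed by the per-link exponent $\max_{j}\alpha^{(j)}_{i,k}$.

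Combining the two steps, the extended centralized converse applied to the genie-aided system yields
\[
\GDoF^{\mathrm{genie}}(\{\alpha^{(j)}_{i,k}\}) = \GDoF^{CCSIT}(\{\max_{j}\alpha^{(j)}_{i,k}\}),
\]
and chaining this with the genie inequality gives the claim.

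The step I expect to be the main obstacle is the second one: rigorously transporting the centralized converse of \cite{Davoodi2016_ISIT_journal_BC}, which is stated for a single estimate of a prescribed statistical structure, to the fused observation $(\hat{\bH}^{(1)},\hat{\bH}^{(2)})$, and in particular certifying that supplying the weaker estimate alongside the stronger one does not improve the effective CSIT quality beyond $\max_{j}\alpha^{(j)}_{i,k}$ at the exponential-scaling resolution. Intuitively, the minimum-mean-square-error fusion of two independent noisy looks at $\bH_{i,k}$ has an error exponent equal to the larger of the two individual exponents, so the weaker estimate contributes nothing to the GDoF; assumption~\eqref{eq:cond_prob} is exactly the regularity condition that makes this rigorous and lets the cited converse go through verbatim. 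Once this equivalence is established the remaining algebra is immediate, since $\GDoF^{CCSIT}$ is an explicit function of the per-link exponents.
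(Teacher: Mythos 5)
Your proposal is correct and follows essentially the same route as the paper: a genie-aided enhancement in which both transmitters share $\{\hat{\bH}^{(1)},\hat{\bH}^{(2)}\}$, followed by an application of the centralized outer bound of \cite{Davoodi2016_ISIT_journal_BC}, with assumption~\eqref{eq:cond_prob} serving precisely to certify that the fused estimates have effective quality exponent $\max_{j\in\{1,2\}}\alpha^{(j)}_{i,k}$. The paper states this in two lines and leaves the details implicit; your write-up simply makes explicit the same genie argument and the same role of \eqref{eq:cond_prob}.
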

\begin{proof} 
Assuming a genie-aided model where all the TXs share \emph{perfectly} his local estimation, and denoting the total available CSIT as $\mathcal{H} \triangleq \{\hat{\bH}^{(1)},\hat{\bH}^{(2)}\}$, it holds from \eqref{eq:cond_prob} that it is possible to apply the centralized outerbound in \cite{Davoodi2016_ISIT_journal_BC}.
\end{proof}
This centralized genie-aided model satisfies that
\begin{align}
	\alpha_{i,k} \triangleq \max_{j\in\{1,2\}}(\alpha^{(j)}_{i,k}), \qquad \forall i,k\in\{1,2\}.
\end{align}

\section{Main Results}\label{se:main}

We can now state our main result.
\begin{theorem}\label{theo:distributed_gdof}
In the $2$-user MISO BC with Distributed CSIT, the optimal sum $\GDoF$ is given by
	\begin{align}
		\GDoF^{DCSIT}(\{\alpha^{(j)}_{i,k}\}_{i,j,k\in\{1,2\}}) = \GDoF^{CCSIT}(\{\max_{j\in\{1,2\}}\alpha^{(j)}_{i,k}\}_{i,k\in\{1,2\}}).
        \label{eq:dist_cent}
	\end{align}	
\end{theorem}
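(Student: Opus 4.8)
The plan is to establish the achievability direction only, since the matching converse $\GDoF^{DCSIT}\le\GDoF^{CCSIT}(\{\max_j\alpha^{(j)}_{i,k}\})$ is already supplied by Theorem~\ref{theo:centralized_outerbound_proof}. Hence it remains to exhibit a \emph{distributed} transmission scheme whose achievable sum $\GDoF$ equals the right-hand side $\min(D_1,D_2)$ evaluated at the shared qualities $\alpha_{i,k}=\max_j\alpha^{(j)}_{i,k}$, which by the ordering~\eqref{eq:alphas_order} equals $\alpha^{(1)}_{i,k}$.

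First I would recall the layered structure behind the centralized achievability of Theorem~\ref{theo:centralized_outerbound}: a superposition of common messages sent at full power along the strongest available channel direction (producing the $\max(\gamma_{i,1},\gamma_{i,2})$-type terms) and private messages zero-forced toward the unintended RX (producing the $\alpha$-dependent second terms). The aim is to reproduce this exact rate/power split in the distributed setting, replacing conventional ZF by Active--Passive Zero-Forcing (AP-ZF) for every private stream so that the two TXs, despite holding inconsistent estimates $\hat{\bH}^{(1)}$ and $\hat{\bH}^{(2)}$, jointly realize the nulling quality of the better estimate.

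The core of the argument is the AP-ZF construction. For the stream intended to RX~$i$ and to be nulled at RX~$\bar{i}$, write the precoder as $\bt_i$ with entries $t_{i,1}$ and $t_{i,2}$ computed locally at TX~$1$ and TX~$2$. I would let the less-informed TX~$2$ act \emph{passively}, fixing $t_{i,2}$ from the shared statistical/pathloss information alone so that its value is predictable at TX~$1$; the more-informed TX~$1$ then acts \emph{actively}, completing the null with its own estimate via $t_{i,1}=-(\hat{\bH}^{(1)}_{\bar{i},2}/\hat{\bH}^{(1)}_{\bar{i},1})\,t_{i,2}$. The residual interference $\bh_{\bar{i}}^\He\bt_i$ then involves only TX~$1$'s estimation error, so by the error model~\eqref{eq:distributed model} its leakage power is governed solely by TX~$1$'s exponents, matching exactly the interference suppression a centralized system with shared quality $\alpha_{i,k}=\alpha^{(1)}_{i,k}=\max_j\alpha^{(j)}_{i,k}$ would attain.

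With this leakage level established, the per-RX rate analysis is identical to the centralized case: I would expand the received signal at each RX, separate intended-signal, residual-interference and noise contributions while tracking the channel strengths $\gamma_{i,k}$ through~\eqref{eq:channel_loss}, and verify by successive decoding that RX~$1$ and RX~$2$ respectively achieve $D_1$ and $D_2$. I expect the main obstacle to be the leakage analysis itself: proving that AP-ZF remains feasible under the power constraint $\normv{\xv}^2\doteq P$ across all pathloss regimes (the active coefficient can blow up when $\hat{\bH}^{(1)}_{\bar{i},1}$ is weak), and showing rigorously --- invoking~\eqref{eq:cond_prob} to control the expected residual powers --- that the leakage exponent is dictated by TX~$1$'s error $\alpha^{(1)}$ rather than by TX~$2$'s weaker $\alpha^{(2)}$. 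This last point is precisely where the penalty of distributed CSIT is removed.
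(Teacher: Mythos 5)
Your skeleton coincides with the paper's: the converse is exactly Theorem~\ref{theo:centralized_outerbound_proof}, and achievability is a superposition of a full-power common symbol with AP-ZF precoded private streams, TX~2 passive and TX~1 active, so that the residual leakage is governed by $\alpha^{(1)}$. However, the proposal has genuine gaps at precisely the points where the GDoF analysis differs from the DoF analysis of \cite{dekerret2016_ISIT_journal}. The feasibility issue you defer as an ``expected obstacle'' is in fact the central technical content: the paper resolves it in Lemma~\ref{lem:power_scaling} and Corollary~\ref{cor:received}, which show that meeting the power constraint forces one of the two TXs to back off by $P^{|\gamma_{\bar{i},1}-\gamma_{\bar{i},2}|}$, so that an AP-ZF stream sent with power $P^{\tau}$ is received with useful power $P^{\tau-1}\max\bigl(P^{\gamma_{i,1}-(\gamma_{\bar{i},1}-\gamma_{\bar{i},2})^+},P^{\gamma_{i,2}-(\gamma_{\bar{i},2}-\gamma_{\bar{i},1})^+}\bigr)$ rather than $P^{\tau-1}\max\bigl(P^{\gamma_{i,1}},P^{\gamma_{i,2}}\bigr)$. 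All transmit powers and rates must be designed around this loss (e.g.\ the power boost $\min(\gamma_{1,1}-\gamma_{1,2},\gamma_{2,1}-\gamma_{2,2})$ in the paper's case $\gamma_{2,2}\le\gamma_{2,1}$); no step of your plan performs this bookkeeping.

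Second, your layering is incomplete, and with the rate assignment you describe it falls short of the target in asymmetric topologies. You tie the private rates to the ``$\alpha$-dependent'' leakage terms, i.e.\ to the exponent $\rho$ up to which AP-ZF keeps the interference at the noise floor. Taking w.l.o.g.\ $\gamma_{1,1}$ largest and $\gamma_{2,1}\le\gamma_{2,2}$, the target is $\min(D_1,D_2)=\gamma_{1,1}+\rho$, whereas a common layer $(\gamma_{2,2}-\rho)$ plus two leakage-limited private streams $(2\rho)$ sums only to $\gamma_{2,2}+\rho$. The paper closes the gap $\gamma_{1,1}-\gamma_{2,2}$ with a third layer: an extra private symbol $z_1$ for RX~1, matched-filter precoded with power $P^{1-\gamma_{2,2}}$, which lies below the noise floor of RX~2 using no instantaneous CSIT at all, only the topology. (Equivalently, one must argue that after removing the common symbol the AP-ZF stream of RX~1 supports the larger rate $(\gamma_{1,1}-\gamma_{2,2}+\rho)\log_2 P$; either way, this step is absent from your plan.) Relatedly, your check that ``RX~1 and RX~2 respectively achieve $D_1$ and $D_2$'' misreads Theorem~\ref{theo:centralized_outerbound}: $D_1$ and $D_2$ are each bounds on the \emph{sum} GDoF, not per-user rates. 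What must be verified is that the common plus private rates sum to $\min(D_1,D_2)$, which requires the explicit choice of $\rho$ as the minimum over the two leakage constraints (one per RX) together with a case split on the topology, neither of which appears in the proposal.
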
     
\begin{proof}
The outer bound comes directly from the centralized outer bound presented above and the main contribution is to prove the achievability. This is done by showing that AP-ZF, which is optimal in terms of DoF, is also optimal in terms of Generalized DoF. This requires to prove first some intermediate results in Section~\ref{se:apzf_prem} before turning to the proof. In Section~\ref{se:achievability_simple} a insightful case is shown to get a main insight of the scheme, while the general proof of the achievability is shown in the appendix.
\end{proof}

Surprisingly, even when the most informed TX has only weak links, the system behaves as if both TXs shared the best CSIT estimate, which can be rather counter-intuitive.
Since Theorem~\ref{theo:centralized_outerbound} shows that the $\GDoF$ only depends on the weakest CSIT parameter for each RX, we define the distributed counterparts of $\alpha_1,\alpha_2$ as 
	  \begin{align}
		\alpha^{(j)}_1 & \triangleq \min_{k\in\{1,2\}} \alpha^{(j)}_{1,k}, \qquad \forall j \in \{1,2\},\\
		\alpha^{(j)}_2 & \triangleq \min_{k\in\{1,2\}} \alpha^{(j)}_{2,k}, \qquad \forall j \in \{1,2\}.
	\end{align}  
The main consequences of the $\GDoF$ model for the Distributed CSIT setting are stated in the following.

\section{Preliminaries: Analysis of APZF Precoding}\label{se:apzf_prem}
We firstly characterize the $2$-user AP-ZF precoder behaviour in the GDoF model. For that, the power consumption at each TX is stated for any possible network topology, and from that result the intended signal received power and the remaining interference power are presented. 

\subsection{AP-ZF Precoder for the $2$-user Setting}\label{subse:apzf}
For the sake of completeness, the AP-ZF precoder first introduced in \cite{dekerret2016_ISIT_journal} is briefly presented, as it is a key component of the proposed transmission scheme. The core feature of this precoder is the uneven precoding that allows the most informed TX to neutralize the interference generated by the other TX.

Let RX~$i$ be the intended RX and RX~$\bar{i}$ be the interfered RX. As TX~$1$ is the most informed TX, the AP-ZF beamformer is given by
\begin{align}
t^{(2)}_i &\triangleq c_P,\label{eq:apzf_tx_pas}	\\
t^{(1)}_i &\triangleq -\hat{h}^{(1)}_{\bar{i},1}\left(\abs{\hat{h}^{(1)}_{\bar{i},1}}^2 + \frac{1}{P}\right)^{-1}\hat{h}^{(1)\Herm}_{\bar{i},2} t^{(2)}_i, \label{eq:apzf_tx_act}	
\end{align}
where $c_P$ is a constant that can be made dependent on $P$. 
Therefore, for the transmission towards a certain RX, the less informed TX, so-called \emph{passive TX}, selects as fixed precoding coefficient known by both TXs and thus it does not use its own CSIT information. On the other hand, the most informed TX, so-called \emph{active TX}, selects the precoder coefficient that generates a received signal at the interfered RX with the opposite phase of the one that comes from the other TX. 
It can easily be seen that, as a consequence of that precoding scheme, the interference power received at RX~$\bar{i}$ is decreased by a factor $P^{-\alpha^{(1)}_{\bar{i}}}$\cite{dekerret2016_ISIT_journal}. 

\subsection{Power Consumption}\label{subse:lemmas}

The main impact of the $\GDoF$ model comes from the power normalization at the TXs, as shown in the following lemma.



\begin{lemma}\label{lem:power_scaling}
In the $2$-user MISO BC, the AP-ZF precoder $\bt^{\APZF}_i$ aimed to RX~$i$ and transmitted with power $||\bt^{\APZF}_i||_2^2 \doteq P^{\tau}$,  $\tau \in[0,1]$, 
satisfies
			\begin{align}
							 {\abs{t^{(1)}_i}^2} &\doteq P^{\tau-(\gamma_{\bar{i},1}-\gamma_{\bar{i},2})^+},\label{eq:lemma_eq_1}\\
							 {\abs{t^{(2)}_i}^2} &\doteq P^{\tau-(\gamma_{\bar{i},2}-\gamma_{\bar{i},1})^+}.\label{eq:lemma_eq_2} 
			 \end{align}				
\end{lemma}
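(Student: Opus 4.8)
The plan is to reduce the statement to a single scaling relation between the two per-TX powers and then combine it with the power budget $\|\bt^{\APZF}_i\|_2^2 \doteq P^{\tau}$. First I would record the exponential order of the estimates that enter the active precoder. Since $\hat{h}^{(1)}_{\bar{i},k} = \bH_{\bar{i},k} + \sqrt{P^{-\alpha^{(1)}_{\bar{i},k}}}\bm{\Delta}^{(1)}_{\bar{i},k}$ with $|\bm{\Delta}^{(1)}_{\bar{i},k}| \doteq \sqrt{P^{\gamma_{\bar{i},k}-1}}$ and $\alpha^{(1)}_{\bar{i},k} \geq 0$, the error term is of order $\sqrt{P^{\gamma_{\bar{i},k}-1-\alpha^{(1)}_{\bar{i},k}}}$, which is exponentially dominated by the true coefficient; hence $|\hat{h}^{(1)}_{\bar{i},k}| \doteq \sqrt{P^{\gamma_{\bar{i},k}-1}}$ for $k\in\{1,2\}$ (almost surely, since the full-rank assumption rules out cancellations of measure zero).

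Next I would evaluate the regularized inverse appearing in \eqref{eq:apzf_tx_act}. Taking magnitudes gives
\begin{align}
|t^{(1)}_i| = \frac{|\hat{h}^{(1)}_{\bar{i},1}|\,|\hat{h}^{(1)}_{\bar{i},2}|}{|\hat{h}^{(1)}_{\bar{i},1}|^2 + \frac{1}{P}}\,|t^{(2)}_i|.
\end{align}
The key observation is that the regularizer does not alter the exponential order of the denominator: because $\gamma_{\bar{i},1}\geq 0$, we have $|\hat{h}^{(1)}_{\bar{i},1}|^2 + \frac{1}{P} \doteq P^{\gamma_{\bar{i},1}-1}+P^{-1} \doteq P^{\gamma_{\bar{i},1}-1}$. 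Substituting the estimate orders from the first step and collecting exponents then yields the single relation
\begin{align}
|t^{(1)}_i|^2 \doteq P^{\gamma_{\bar{i},2}-\gamma_{\bar{i},1}}\,|t^{(2)}_i|^2.
\label{eq:ratio_plan}
\end{align}
Note that $t^{(2)}_i = c_P$ drops out of the ratio except through its own magnitude, so its exact value is immaterial here.

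Finally I would feed \eqref{eq:ratio_plan} into the power budget $|t^{(1)}_i|^2 + |t^{(2)}_i|^2 \doteq P^{\tau}$, which forces $\max(|t^{(1)}_i|^2,|t^{(2)}_i|^2) \doteq P^{\tau}$. Writing $x,y$ for the exponents of $|t^{(1)}_i|^2$ and $|t^{(2)}_i|^2$, we have $x-y = \gamma_{\bar{i},2}-\gamma_{\bar{i},1}$ and $\max(x,y)=\tau$; solving this two-variable system gives $x = \tau-(\gamma_{\bar{i},1}-\gamma_{\bar{i},2})^+$ and $y = \tau-(\gamma_{\bar{i},2}-\gamma_{\bar{i},1})^+$, which are precisely \eqref{eq:lemma_eq_1}–\eqref{eq:lemma_eq_2}. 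Equivalently, one can split into the cases $\gamma_{\bar{i},2}\geq\gamma_{\bar{i},1}$ (so the active TX dominates the budget) and $\gamma_{\bar{i},2}<\gamma_{\bar{i},1}$ (so the passive TX does) and verify the two formulas directly.

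I expect the only genuinely delicate step to be controlling the regularized inverse, namely verifying that the additive $\frac{1}{P}$ term never dominates $|\hat{h}^{(1)}_{\bar{i},1}|^2$ in exponential order; this rests on $\gamma_{\bar{i},1}\geq 0$ together with the estimate inheriting the true-channel order established in the first step. Everything else is routine bookkeeping with exponential (in)equalities.
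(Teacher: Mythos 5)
Your proposal is correct and follows essentially the same route as the paper's own proof: both rest on showing that the regularized inverse in the active coefficient is exponentially negligible (since $\gamma_{\bar{i},1}\ge 0$), deducing the ratio $|t^{(1)}_i|^2 \doteq P^{\gamma_{\bar{i},2}-\gamma_{\bar{i},1}}|t^{(2)}_i|^2$, and then saturating the power budget $P^{\tau}$, which the paper does by an explicit case split on the sign of $\gamma_{\bar{i},2}-\gamma_{\bar{i},1}$ and you do by solving the equivalent $\max$/difference system. Your added justification that the channel estimates inherit the true channel's exponential order (because $\alpha^{(1)}_{\bar{i},k}\ge 0$) is a point the paper simply asserts, so it is a welcome, if minor, refinement rather than a different argument.
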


\begin{proof}
Letting the constant coefficient of \eqref{eq:apzf_tx_pas} fulfill
		\begin{equation}
		\begin{aligned}
			\abs{t^{(2)}_i} \doteq \sqrt{P^x} \label{eq:power_scaling_passive},\mformtab x\in [0,1], 
		\end{aligned}	 
		\end{equation}
	it holds from \eqref{eq:apzf_tx_act} that the coefficient designed at TX~$1$ satisfies
		\begin{equation}
		\begin{aligned}
				\abs{t^{(1)}_i} &=	\abs{\hat{h}^{(1)}_{\bar{i},1}} \abs{-\left(\abs{\hat{h}^{(1)}_{\bar{i},1}}^{2} + \frac{1}{P}\right)^{-1}}\abs{\hat{h}^{(1)\Herm}_{\bar{i},2}}\abs{t^{(2)}_i}. \label{eq:precod_2us_def}
		\end{aligned}	 
		\end{equation}	  
By definition (see equation~\eqref{eq:channel_loss}), it also holds 
				\begin{align}
						\abs{\hat{h}^{(1)}_{\bar{i},2}} \doteq \sqrt{P^{\gamma_{\bar{i},2}-1}} \label{eq:passive_doteq},\\
						\abs{\hat{h}^{(1)}_{\bar{i},1}} \doteq \sqrt{P^{\gamma_{\bar{i},1}-1}} \label{eq:active_doteq},
				\end{align}
and then the absolute value satisfies that
		\begin{align}
        	\abs{t^{(1)}_i}   &\doteq   \sqrt{P^{\gamma_{\bar{i},1}-1}}\abs{P^{\gamma_{\bar{i},1}-1}(1   +   P^{-\gamma_{\bar{i},1}})}^{-1}   \sqrt{P^{\gamma_{\bar{i},2}-1}}\sqrt{P^x} \nonumber\\
                 &\doteq \sqrt{P^{x+(\gamma_{\bar{i},2} - \gamma_{\bar{i},1})}}. \label{eq:power_scaling_active}
		\end{align}
From \eqref{eq:power_scaling_passive} and \eqref{eq:power_scaling_active}, given that the final precoder should have a power of $ \|\bt^{\APZF}_i\|_{2} = \sqrt{P^{\tau}}$, the optimal choices for $x$ are
		\begin{align}
				&\sqrt{P^x} = \begin{cases}
					\sqrt{P^{\tau}} &\text{ if $\gamma_{\bar{i},2}-\gamma_{\bar{i},1} \leq 0$},\\
				 	\sqrt{P^{\tau-(\gamma_{\bar{i},2}-\gamma_{\bar{i},1})}}   &\text{ if $\gamma_{\bar{i},2} -\gamma_{\bar{i},1} > 0$},
				\end{cases}
		\end{align}	
        which concludes the proof.
\end{proof}

\begin{remark}\label{remark:lower_power}
     It can be seen from \eqref{eq:lemma_eq_1}-\eqref{eq:lemma_eq_2} that there is always one TX which reaches the power constraint (i.e. $P^{\tau}$), while at the other TX the power is reduced to $P^{\tau - \abs{\gamma_{2,2}-\gamma_{2,1}}}$. \qed
\end{remark} 
Building upon Lemma~\ref{lem:power_scaling}, the following results on the scaling of the received signals are easily obtained from the network topology.



\begin{corollary}\label{cor:received}
In the $2$-user MISO BC with Distributed CSIT, transmitting with power $||\bt^{\APZF}_i||_2^2 \doteq P^{\tau}$, the intended signal received power at  RX~$i$, $i\in\{1,2\}$, satisfies
			\begin{equation}
			\begin{aligned}
						 \abs{\bh^\Herm_i{\bt^{\APZF}_i} }^2   &\doteq  P^{\tau-1} \max(P^{\gamma_{i,1} -(\gamma_{\bar{i},1}-\gamma_{\bar{i},2})^+}   , P^{\gamma_{i,2} -(\gamma_{\bar{i},2}-\gamma_{\bar{i},1})^+} ), \label{eq:received_power}
			\end{aligned}
			\end{equation}
while the interference power at the same RX~$i$ from the signal intended to the other RX~$\bar{i}$ satisfies
			\begin{equation}
						\abs{\bh^\He_i{\bt^{\APZF}_{\bar{i}}}}^2 \dotleq P^{\tau-1}P^{\min(\gamma_{i,1},\gamma_{i,2})-\alpha^{(1)}_i}.\label{eq:received_interference}
			\end{equation}
\end{corollary}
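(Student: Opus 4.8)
The plan is to prove the two scaling laws separately, in each case expanding the received scalar over the two transmit antennas and feeding the per-antenna magnitudes from Lemma~\ref{lem:power_scaling} and the channel-strength model~\eqref{eq:channel_loss} into the product.

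For the intended power~\eqref{eq:received_power}, I would write $\bh^\He_i\bt^{\APZF}_i = h^{*}_{i,1}t^{(1)}_i + h^{*}_{i,2}t^{(2)}_i$ and evaluate the two summands via $\abs{h_{i,k}}\doteq\sqrt{P^{\gamma_{i,k}-1}}$ together with $\abs{t^{(1)}_i}^2\doteq P^{\tau-(\gamma_{\bar i,1}-\gamma_{\bar i,2})^+}$ and $\abs{t^{(2)}_i}^2\doteq P^{\tau-(\gamma_{\bar i,2}-\gamma_{\bar i,1})^+}$. The products scale as $P^{\tau-1}P^{\gamma_{i,1}-(\gamma_{\bar i,1}-\gamma_{\bar i,2})^+}$ and $P^{\tau-1}P^{\gamma_{i,2}-(\gamma_{\bar i,2}-\gamma_{\bar i,1})^+}$. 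Because $\bt^{\APZF}_i$ is built solely from TX~1's estimate of RX~$\bar i$'s channel, it is statistically independent of the true channel $\bh_i$; the two summands therefore carry generic, independent phases and cannot cancel at the exponential scale, so that $\abs{a+b}\doteq\max(\abs a,\abs b)$ almost surely. Taking the maximum of the two exponents reproduces~\eqref{eq:received_power}.

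For the leakage~\eqref{eq:received_interference} I would split the true channel as $\bh^\He_i=\hat{\bh}^{(1)\He}_i-(\hat{\bh}^{(1)}_i-\bh_i)^\He$ and apply the triangle inequality, since only an upper bound is needed. The estimate-aligned part $\hat{\bh}^{(1)\He}_i\bt^{\APZF}_{\bar i}$ is exactly what the active coefficient~\eqref{eq:apzf_tx_act} (with roles $i\leftrightarrow\bar i$) is designed to null; substituting $t^{(1)}_{\bar i}$ collapses it to the regularizer residual $\hat h^{(1)*}_{i,2}t^{(2)}_{\bar i}(1/P)(\abs{\hat h^{(1)}_{i,1}}^2+1/P)^{-1}$, whose square scales as $P^{\tau-1}P^{\min(\gamma_{i,1},\gamma_{i,2})-2\gamma_{i,1}}$ and is dominated because $2\gamma_{i,1}\geq\alpha^{(1)}_{i,1}\geq\alpha^{(1)}_i$. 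The surviving contribution is the CSIT-error part $(\hat{\bh}^{(1)}_i-\bh_i)^\He\bt^{\APZF}_{\bar i}=e_1^{*}t^{(1)}_{\bar i}+e_2^{*}t^{(2)}_{\bar i}$, where $e_k\triangleq\hat h^{(1)}_{i,k}-h_{i,k}$ obeys $\abs{e_k}\doteq\sqrt{P^{\gamma_{i,k}-1-\alpha^{(1)}_{i,k}}}$. Inserting the precoder exponents of Lemma~\ref{lem:power_scaling} and using the identity $a-(a-b)^+=\min(a,b)$, both cross terms scale as $P^{\tau-1}P^{\min(\gamma_{i,1},\gamma_{i,2})-\alpha^{(1)}_{i,k}}$ for $k=1,2$; their maximum is governed by $\alpha^{(1)}_i=\min(\alpha^{(1)}_{i,1},\alpha^{(1)}_{i,2})$, giving precisely the exponent in~\eqref{eq:received_interference}.

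I expect the leakage term to be the main obstacle. The delicate point is to certify that the AP-ZF design annihilates the estimate-aligned component down to the negligible regularizer residual, so that the interference floor is set purely by the error cross terms $e_k^{*}t^{(\cdot)}_{\bar i}$, and then to balance the two error exponents correctly to extract $\min_k\alpha^{(1)}_{i,k}$. By contrast, the intended-power claim is a routine max-of-two-magnitudes computation once the no-cancellation (genericity) step is justified. Stating~\eqref{eq:received_interference} as $\dotleq$ is the safe and natural form, since partial cancellation between the two error terms, or boundary effects of the regularizer when some $\gamma_{i,k}$ vanishes, can only lower the exponent.
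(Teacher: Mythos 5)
Your proposal is correct, and both scaling laws come out with the right exponents. For the intended power~\eqref{eq:received_power} you follow essentially the same route as the paper: expand $\bh^\He_i\bt^{\APZF}_i$ over the two antennas, insert the per-coefficient powers of Lemma~\ref{lem:power_scaling} and the channel strengths~\eqref{eq:channel_loss}, and take the maximum (the paper simply asserts the max-of-terms step, whereas you justify it by independence of $\bt^{\APZF}_i$ from $\bh_i$, which is if anything more careful). For the leakage~\eqref{eq:received_interference}, however, your route is genuinely different from the paper's. The paper argues in two strokes: it notes that both cross terms $\abs{h^\Herm_{i,k}t^{(k)}_{\bar i}}^2$ individually scale as $P^{\tau-1+\min(\gamma_{i,1},\gamma_{i,2})}$ (its displayed exponents actually carry a typo, $\min(\gamma_{\bar i,1},\gamma_{\bar i,2})$ in place of $\min(\gamma_{i,1},\gamma_{i,2})$, which your derivation implicitly corrects), and then it \emph{cites} \cite{dekerret2016_ISIT_journal} for the claim that AP-ZF knocks the sum down by the factor $P^{-\alpha^{(1)}_i}$. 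You instead re-derive that cancellation from scratch: splitting $\bh^\He_i=\hat{\bh}^{(1)\He}_i-(\hat{\bh}^{(1)}_i-\bh_i)^\He$, showing the estimate-aligned part collapses to the regularizer residual of order $P^{\tau-1}P^{\min(\gamma_{i,1},\gamma_{i,2})-2\gamma_{i,1}}$ (dominated since $\alpha^{(1)}_i\leq\alpha^{(1)}_{i,1}\leq\gamma_{i,1}$), and extracting the exponent $-\alpha^{(1)}_i=-\min_k\alpha^{(1)}_{i,k}$ from the error cross terms via $a-(a-b)^+=\min(a,b)$. What the paper's citation buys is brevity; what your decomposition buys is a self-contained verification that the DoF-regime interference-cancellation argument of \cite{dekerret2016_ISIT_journal} actually survives the GDoF pathloss scaling --- which is the substantive content of the corollary --- together with a correct treatment of the $\gamma_{i,k}=0$ boundary and of the one-sided $\dotleq$ form of the statement.
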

As main insight, it is noted that the $\pm(\gamma_{\bar{i},2}-\gamma_{\bar{i},1})$ terms in \eqref{eq:received_power}, as well as the $\min(\gamma_{i,1},\gamma_{i,2})$ term in \eqref{eq:received_interference}, come from the fact that the TX with greater channel strength reduces his power to match the power received from the other TX so as to be able to cancel the interference. 

\begin{proof}
As we are analyzing the $2$-user case, it holds that
			\begin{align}
				\abs{\bh^\Herm_i\bt^{\APZF}_i}^2
                  \doteq \max\Big(\Big|{h^\Herm_{i,1}t^{(1)}_{i}}\Big|^2 ,\Big|{h^\Herm_{i,2}t^{(2)}_{i}}\Big|^2 \Big).		\label{eq:proof_lower}
            \end{align}
From Lemma~\ref{lem:power_scaling}, it holds that the power of both coefficients differs in $P^{|\gamma_{\bar{i},2}-\gamma_{\bar{i},1}|}$. Assuming that the transmitted power scales as $P^\tau$, it holds that
			\begin{align}
						\abs{h^\Herm_{i,1}t^{(1)}_{i}}^2 &=\abs{h^\Herm_{i,1}}^2\abs{t^{(1)}_{i}}^2\\
                        &\doteq P^{\gamma_{i,1}-1} {P^{\tau-(\gamma_{\bar{i},1}-\gamma_{\bar{i},2})^+}},
                        \label{eq:proof_lower_a}
			\end{align}
and, in the same way,
			\begin{align}
									\abs{h^\Herm_{i,2}t^{(2)}_{i}}^2 &=\abs{h^\Herm_{i,2}}^2\abs{t^{(2)}_{i}}^2\\
                        &\doteq P^{\gamma_{i,2}-1} {P^{\tau-(\gamma_{\bar{i},2}-\gamma_{\bar{i},1})^+}}.
                        \label{eq:proof_lower_p}
			\end{align}
			Including \eqref{eq:proof_lower_a} and \eqref{eq:proof_lower_p} in \eqref{eq:proof_lower}, we prove \eqref{eq:received_power}. Focusing now in the proof of the interference power expression in \eqref{eq:received_interference}, it holds that
			\begin{align}
						\abs{\bh^\He_i{\bt^{\APZF}_{\bar{i}}}}^2 & = \abs{h^\Herm_{i,1}t^{(1)}_{\bar{i}} + h^\Herm_{i,2}t^{(2)}_{\bar{i}}}^2. \label{eq:inter_proof1}
			\end{align}
As stated in Remark~\ref{remark:lower_power}, the TX with stronger interfering channel reduces his transmission power in a factor $P^{\abs{\gamma_{\bar{i},2}-\gamma_{\bar{i},1}}}$, and hence the terms in \eqref{eq:inter_proof1} satisfy
			\begin{align}
						\abs{h^\Herm_{i,1}t^{(1)}_{\bar{i}}}^2 = P^{\tau-1+\min({\gamma_{\bar{i},1},\gamma_{\bar{i},2}})},\\
						\abs{h^\Herm_{i,2}t^{(2)}_{\bar{i}}}^2 = P^{\tau-1+\min({\gamma_{\bar{i},1},\gamma_{\bar{i},2}})}. 
			\end{align}
As it is known that the AP-ZF precoder reduces the interference by a factor $P^{\alpha^{(1)}_i}$ \cite{dekerret2016_ISIT_journal}, it holds that
			\begin{align}
						\abs{\bh^\He_i{\bt^{\APZF}_{\bar{i}}}}^2 & = \abs{h^\Herm_{i,1}t^{(1)}_{\bar{i}} + h^\Herm_{i,2}t^{(2)}_{\bar{i}}}^2 \\
						&=P^{\tau-1}P^{\min(\gamma_{i,1},\gamma_{i,2})-\alpha^{(1)}_i},
			\end{align}
			which concludes the proof.
\end{proof}


\begin{remark}
		The value of $\alpha^{(1)}_i$ make only sense in the interval $[0,\min(\gamma_{i,1},\gamma_{i,2})]$, since a signal scaling in $P^{-1}$ has no impact in terms of DoF/GDoF. \qed 
\end{remark} 

\section{Achievability in the Parallel Configuration}\label{se:achievability_simple}
In the following, Theorem~\ref{theo:distributed_gdof} is proved for one specific topology, which we denote as the \emph{Parallel Configuration}. This simple setting allows to convey the main intuition of the proof while avoiding cluttered and heavy notations. The proof for any possible topology is given in the Appendix. 
In the \emph{Parallel Configuration}, represented in Fig.~\ref{fig:fig_sim_case}, it holds that
\begin{align}
\gamma_{i,i}&=1,\ \forall i \in \{1,2\} \label{eq:parallel_ch_1},\\
\gamma_{i,k}&=\gamma, \ \forall i,k \in \{1,2\} \mid k \neq i.\label{eq:parallel_ch_2}
\end{align}
Therefore, the CSIT \emph{quality exponents} are limited by
\begin{align}
\alpha^{(j)}_i&\leq \gamma,\ \forall i,j \in \{1,2\} \label{eq:alpha_bound},
\end{align}
and we assume that each RX has the same CSI quality, i.e., $\alpha^{(j)} = \alpha^{(j)}_i\ \forall i\in\{1,2\} $. Hence, Theorem~\ref{theo:distributed_gdof} then gives 
\begin{equation}
\begin{aligned}
		\GDoF^{DCSIT}(\{\alpha^{(j)}_{i,k}\}_{i,j,k\in\{1,2\}}) &= 2 - \gamma + \alpha^{(1)}.
\end{aligned}
\end{equation}

\begin{figure}[ht]
 \centering
				\includegraphics[trim = 31mm 74mm 34mm 59mm, clip,width=0.3\textwidth]{./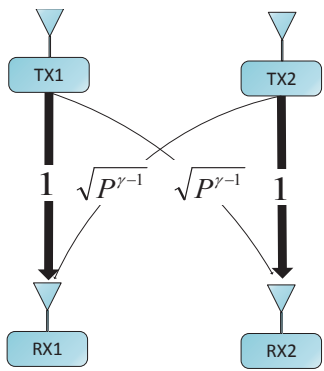}     
				\caption{Network topology for the \emph{Parallel Configuration}.} 
	\label{fig:fig_sim_case}			
\end{figure}

\subsection{Sketch of the proof}\label{subse:parallel_conf}
In the proposed transmission scheme, the transmitted symbols are
			\begin{align}
						\xv =  \frac{\sqrt{P-P^{1+\alpha^{(1)}-\gamma}}}{\sqrt{2}}\left[\begin{array}{c}
                                            1\\
                                            1
                                    \end{array}\right]s_0 + \bt^{\APZF}_1s_1 + \bt^{\APZF}_2s_2,
						\label{eq:transmited_sig}
			\end{align}
where
		\begin{itemize}
				\item $s_0 \in \mathbb{C}$ is a common symbol of rate $(\gamma-\alpha^{(1)})\log_2(P)$ bits that is decoded at both users. 
				\item $s_i \in \mathbb{C}$, with $i \in\{1,2\}$ is a  symbol of rate $(1+\alpha^{(1)}-\gamma)\log_2(P)$~bits intended to user~$i$. $\bt^{\APZF}_i\in\mathbb{C}^{2}$ is the AP-ZF precoder and this symbol is transmitted with power $\normv{\bt^{\APZF}_i}^2 \doteq P^{1+\alpha^{(1)}-\gamma}$.
		\end{itemize}    
			\begin{remark}
					As this work is focused on the high-SNR regime, the transmitted power of the common symbol always satisfies that $P-P^{1+\alpha^{(1)}-\gamma} \doteq P$. There exists still the need of the study of the power allocation in the finite-SNR regime.
			\end{remark}
The received signal at RX~$1$ is
		\begin{align}
				y_1&= \underbrace{\bh^\He_1 \frac{\sqrt{P-P^{1+\alpha^{(1)}-\gamma}}}{\sqrt{2}}\left[\begin{array}{c}
 1\\
 1\end{array}\right]s_0}_{\doteq \sqrt{P}} +  \underbrace{\bh^\He_1 \bt^{\APZF}_1 s_1 }_{\doteq \sqrt{P^{1+\alpha^{(1)}-\gamma}}}  +  \underbrace{\bh^\He_1 \bt^{\APZF}_2 s_2}_{\doteq \sqrt{P^0}}. \label{eq:achiev3c}
		\end{align}
The power scaling for $s_1$ comes from
		\begin{align}
        	\abs{\bh^\Herm_1{\bt^{\APZF}_1}}^2 &\doteq  P^{\tau-1} \max \left( P^{\gamma_{1,1}-(\gamma_{2,1}-\gamma_{2,2})^+}, P^{\gamma_{1,2}-(\gamma_{2,2}-\gamma_{2,1})^+}\right)\label{eq:received_power_rx1_1}\\
             &= P^{1+\alpha^{(1)}-\gamma-1}\max \left(P^{1-(\gamma-1)^+},P^{\gamma-(1-\gamma)^+}\right)\\
             &= P^{1+\alpha^{(1)}-\gamma},
		\end{align}
where  \eqref{eq:received_power_rx1_1} is obtained from applying 
Corollary~\ref{cor:received}, with transmitted power $P^\tau = P^{1+\alpha^{(1)}-\gamma}$.
Also due to Corollary~\ref{cor:received}, the contribution of the interfering symbol $s_2$, lies on the noise floor thanks to the precoding: 
\begin{align}
						\abs{\bh^\He_1\bt^{\APZF}_2}^2 &\dotleq P^{\tau - 1}P^{(\min(\gamma_{1,1},\gamma_{1,2})-\alpha^{(1)})} \label{eq:interference_cancelation_parallel}\\
						&= P^{1+\alpha^{(1)}-\gamma - 1 + (\min(1,\gamma)-\alpha^{(1)})} \\
						&= P^0.
\end{align}
Fig.~\ref{fig:parallel_case_draw} illustrates the different power levels for the transmission towards RX~$1$. It shows that TX~$1$ reduces his transmitted power for $s_2$ to compensate that the channel from TX~$2$ is weaker, so that the interference power received at RX~$1$ from both TXs has the same scaling. Hence, the non-intended symbol scales in $P^{\alpha^{(1)}}$ and therefore thanks to the AP-ZF precoding it is possible to entirely cancel the interference (see Lemma~\ref{lem:power_scaling}). Due to the symmetry of the configuration, the received signal at RX~$2$ is studied in the same way.

We can see in \eqref{eq:achiev3c} that RX~$1$ receives the common symbol $s_{0}$ with a $\SNR$ scaling as $P^{\gamma-\alpha^{(1)}}$, treating $s_{1}$ as noise. After decoding the common symbol and removing its contribution to the received signal, $s_{1}$ has a $\SNR$ that scales as $P^{1+\alpha^{(1)}-\gamma}$. 
Likewise, \eqref{eq:achiev3c_general2} show that for RX~$2$ the common symbol $s_{0}$ has a $\SNR$ scaling as $P^{\gamma-\alpha^{(1)}}$, treating $s_{2}$ as noise. After decoding the common symbol and removing its contribution to the received signal, $s_{2}$ can be decoded, having it a $\SNR$ that scales as $P^{1+\alpha^{(1)}-\gamma}$.

Since symbols that are sent with a rate proportional to the SNR scaling can be decoded with a vanishing error probability, we can decode the common symbol $s_{0}$ with rate $(\gamma-\alpha^{(1)})\log_2(P)$ bits, $s_{1}$ and $s_2$ with rate $(1+\alpha^{(1)}-\gamma)\log_2(P)$ bits. That allows us to achieve a $\GDoF$ of
\begin{equation}
    \begin{aligned}
    		\GDoF^{DCSIT}(\{\alpha^{(j)}_{i,k}\}_{i,j,k\in\{1,2\}}) 	&= \gamma - \alpha^{(1)} + 2(1+\alpha^{(1)}-\gamma) \\
								&= 2+\alpha^{(1)}-\gamma.
    \label{eq:gdof_achievement}
    \end{aligned}
\end{equation}
This corresponds to the GDoF of the Centralized CSIT (See Theorem~\ref{theo:distributed_gdof}).

\begin{figure}[t]
 \centering
        \includegraphics[trim = 40mm 70mm 37mm 77mm, clip,width=0.6\textwidth]{./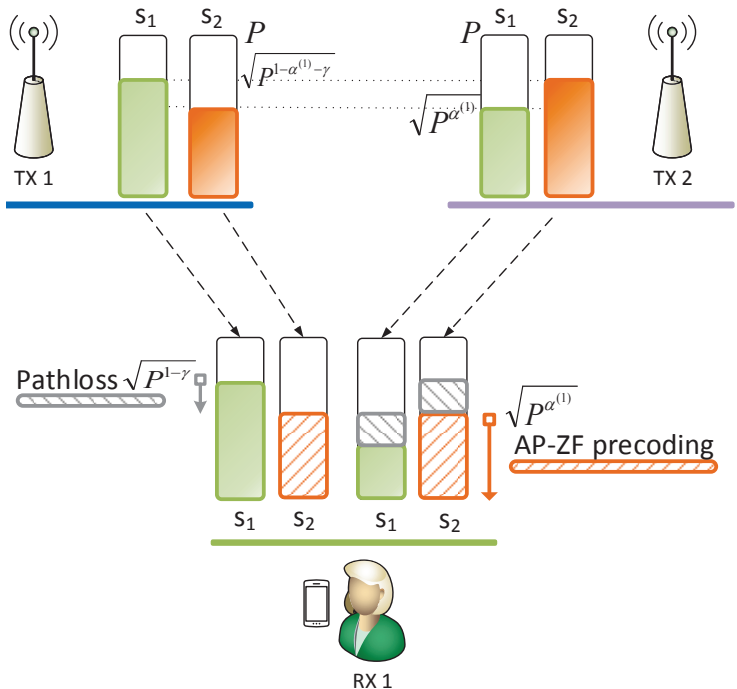}
        \caption{Illustration of the different power scaling for the \emph{Parallel Configuration} setting. Attenuation of the signal power due to the pathloss and the AP-ZF precoding are emphasized  using arrows.}
				\label{fig:parallel_case_draw}
\end{figure}
\FloatBarrier
\section{Simulation Results for the Parallel Configuration}\label{se:simulation} 
We now present some simulation results illustrating our main results. 
We consider the parallel topology introduced earlier in Section~\ref{se:achievability_simple}, with the coefficients:
\begin{align}
\gamma_{i,i} = 1,\ \ \  &\ \forall i\in \{1,2\},\\
\gamma_{i,k} = 0.8, &\ \forall i,k \in \{1,2\},k\neq i,
\end{align}
We further consider that TX~$1$ has the CSIT quality $\alpha^{(1)} = 0.5$ while TX~$2$ has $\alpha^{(2)} = 0$, i.e., no CSIT in terms of GDoF. 

The AP-ZF scheme has been simulated and compared with two different schemes. The first one is the Centralized CSIT setting where both TXs share the CSIT information, which has been shown in Theorem~\ref{theo:centralized_outerbound_proof} to be an outerbound. The second one is the naive distributed Zero-Forcing, where the discrepancies between TXs are not taken into account such that each TX implicitly assumes that the other TX has the same channel estimate \cite{dekerret2012_TIT}. 

\begin{figure}[th]
        \centering
        \includegraphics[trim = 40mm 85mm 40mm 90mm, clip,width=0.7\textwidth]{./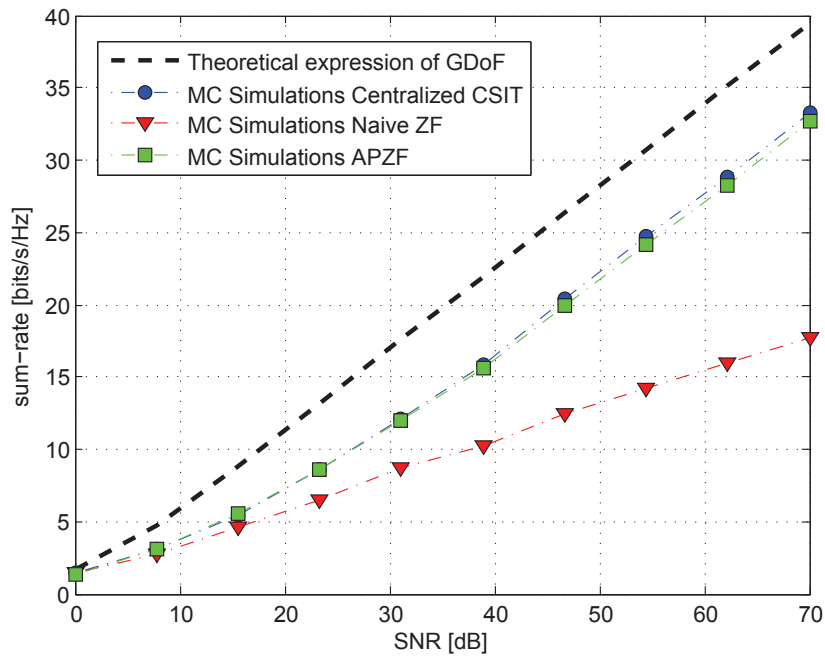}
        \caption{Sum rate in terms of the SNR for the \emph{Parallel Configuration} of Section~\ref{subse:parallel_conf}, with $\alpha^{(1)}=0.5$, $\alpha^{(2)}=0$ and $\gamma=0.8$.}
				\label{fig:plot_a_gamma}
\end{figure}
In Fig.~\ref{fig:plot_a_gamma}, the $\GDoF$ is equal to the slope at high SNR of the sum-rate function over the SNR. It can be seen that AP-ZF in the Distributed CSIT setting achieves the same GDoF of the Centralized CSIT case. Besides this, both cases achieve the theoretic $\GDoF$.
The gap between the outer bound and the simulations comes from the fact that the $\GDoF$ metric does not take into account the finite gaps, since they do not increase as function of $P$ (see \eqref{eq:sysmod_gdof}). 
The naive distributed Zero-Forcing is limited by the worst CSIT quality estimate, $\alpha^{(2)}=0$, and thus the CSIT at the best TX is useless for this naive ZF and it matches the performance of the setting with no CSIT \cite{dekerret2012_TIT}.

\section{Conclusion}\label{se:conclusions}
For the $2$-user MISO BC scenario with Distributed CSIT setting, with one TX being more informed than the other, we have shown that the discrepancies between CSI estimates at TXs do not reduce the $\GDoF$ performance, and that it is possible to achieve the $\GDoF$ of the genie-aided Centralized CSIT setting by a suitable adjustment of the transmitted power at each TX. We have developed an achievable scheme extended from the Active-Passive Zero-Forcing scheme presented in \cite{dekerret2016_ISIT_journal}, whose performance matches the centralized outer bound performance. Providing simulations results in a realistic environment, studying how to optimally reduce the finite gap that does not appears in the $\GDoF$ analysis, as well as the extension towards a setting with $K$ users, are open problems that constitute the next steps for the future research.

 \begin{appendices}
\section{Proof of Theorem~\ref{theo:distributed_gdof}: Achievability}\label{se:general_achievability}
In contrast to the main part of the article, and to preserve the symmetry between the TXs, 
no assumption on which TX has the most accurate CSIT is done in this appendix. We then denote the best CSIT accuracy across TXs as 
		\begin{align}
				\{\alpha^{\max}_{i,k}\} \triangleq \begin{cases}
																			\{\alpha^{(1)}_{i,k}\} \mformtab \text{if $\{\alpha^{(1)}_{i,k}\} \geq \{\alpha^{(2)}_{i,k}\} \qquad \forall i,k\in\{1,2\}$},\\
																			\{\alpha^{(2)}_{i,k}\} \mformtab \text{if $\{\alpha^{(2)}_{i,k}\} > \{\alpha^{(1)}_{i,k}\} \qquad \forall i,k\in\{1,2\}$},
																	\end{cases}		
		\end{align}
and thus we can define the distributed counterparts of the terms $\alpha_1$, $\alpha_2$ from Theorem~\ref{theo:centralized_outerbound} as
		\begin{align}
				\alpha^{\prime}_1 \triangleq 	\min(\alpha^{\max}_{1,1},\alpha^{\max}_{1,2}), \\
				\alpha^{\prime}_2 \triangleq 	\min(\alpha^{\max}_{2,1},\alpha^{\max}_{2,2}).
		\end{align}		
We can assume w.l.o.g. that $\gamma_{1,1}$ is the strongest channel, i.e.,
		\begin{align}
				\gamma_{1,1} \geq \max(\gamma_{1,2},\gamma_{2,1},\gamma_{2,2}).		\label{eq:gamma_condition}	
		\end{align}
For ease of comprehension we split the demonstration in two different network topologies.

\subsection{if $\gamma_{2,1} \leq \gamma_{2,2}$}\label{se:achievability_0}
In that case, the sum GDoF expression of Theorem~\ref{theo:distributed_gdof} reads as 
        	\begin{align}
				\GDoF^{DCSIT}(\{\alpha^{(j)}_{i,k}\}_{i,j,k\in\{1,2\}})) = \min\left(\gamma_{1,1} + (\gamma_{2,2}-\gamma_{1,2}+ \alpha^{\prime}_1)^+,\ \gamma_{2,2} + \gamma_{1,1} -\gamma_{2,1} +  \alpha^{\prime}_2\right).\label{eq:gdof_1}  
			\end{align}
In this regime, the information sent with power below $P^{1-\gamma_{2,2}}$ lies on the noise floor at RX~$2$. Then, we can transmit information to RX~$1$ with power $P^{1-\gamma_{2,2}}$ and rate $(\gamma_{1,1}-\gamma_{2,2})\log_2(P)$ bits without generating interference at RX~$2$. Over this symbol, we use AP-ZF scheme to transmit information to both users.~\\

\subsubsection{Transmitted signal}\label{subsubse:achievability_trans}
Let us define $\rho \in \left[0,1\right]$ as the parameter that represent the rate of the AP-ZF symbols, i.e., the rate is $\rho\log_2(P)$ bits. Omitting the time indices, the transmitted symbols are
			\begin{align}
						\xv =  \bt^{\BC}s_0 + \bt^{\APZF}_1s_1 + \bt^{\APZF}_2s_2 + \bt^{z} z_1 
						\label{eq:achiev1}
			\end{align}
where
		\begin{itemize}
				\item $s_0 \in \mathbb{C}$ is a common symbol of rate $(\gamma_{2,2}-\rho)\log_2(P)$~bits that is decoded at both users. The precoder $\bt^{\BC}$ is the uniform multicast precoder
			\begin{align}
						\bt^{\BC} \triangleq \frac{\sqrt{P-P^{\rho+1-\gamma_{2,2}}-P^{1-\gamma_{2,2}}}}{\sqrt{2}}\left[\begin{array}{c}
                                            1\\
                                            1
                                    \end{array}\right].
						\label{eq:t_bc_def}
			\end{align} 
				\item $s_i \in \mathbb{C}$, $\forall i \in\{1,2\}$ is a AP-ZF symbol intended to RX~$i$ of rate $\rho\log_2(P)$~bits, where
			\begin{align}
						\rho \triangleq \min\left((\gamma_{2,2}-\gamma_{1,2}+ \alpha^{\prime}_1)^+,\ \gamma_{2,2}  -\gamma_{2,1} +  \alpha^{\prime}_2\right).
						\label{eq:rho_def}
			\end{align}  
			The term $\bt^{\APZF}_i$ is the AP-ZF precoder for RX~$i$. The transmitted power is $\normv{\bt^{\APZF}_i}^2\doteq P^{\rho+1-\gamma_{2,2}}$, where it holds that $P^{\rho+1-\gamma_{2,2}} \leq P$.                
 				\item $z_1 \in \mathbb{C}$ is a symbol of rate $(\gamma_{1,1}-\gamma_{2,2})\log_2(P)$~bits that carries a symbol desired by RX~$1$, and it does not generate interference at the other user. $\bt^z$ is the matched precoder with power transmission $\normv{\bt^z}^2\doteq P^{1-\gamma_{2,2}}$.                 
		\end{itemize}~\\
\subsubsection{Received signal}\label{subsubsec:achievability_dec}
The received signal at RX~$1$ is
		\begin{align}
				y_1&= \underbrace{\bh^\He_1 \bt^{\BC}s_0}_{\doteq \sqrt{P^{\gamma_{1,1}}}} +  \underbrace{\bh^\He_1 \bt^{\APZF}_1 s_1 }_{\doteq \sqrt{P^{\gamma_{1,1} -\gamma_{2,2} + \rho}}}  +  \underbrace{\bh^\He_1 \bt^{z} z_1}_{\doteq \sqrt{P^{\gamma_{1,1} -\gamma_{2,2}}}}  +  \underbrace{\bh^\He_1 \bt^{\APZF}_2 s_2}_{\doteq \sqrt{P^0}}, \label{eq:achiev3c_general}
		\end{align}
where the power scale for $s_1$  comes from Corollary~\ref{cor:received} with transmitted power $P^\tau = P^{\rho+1-\gamma_{2,2}}$, since it holds that
\begin{equation}
	\begin{aligned}
						\abs{\bh^\Herm_1{\bt^{\APZF}_1} }^2   &\doteq  P^{\tau-1} \max(P^{\gamma_{1,1} -(\gamma_{2,1}-\gamma_{2,2})^+}   , P^{\gamma_{1,2} -(\gamma_{2,2}-\gamma_{2,1})^+} )\\
										 &\overset{(a)}{=}  P^{\rho+1-\gamma_{2,2}-1} \max(P^{\gamma_{1,1}}, P^{\gamma_{1,2} -(\gamma_{2,2}-\gamma_{2,1})} )\\
										 &\overset{(b)}{=}  P^{\rho+1-\gamma_{2,2}-1} P^{\gamma_{1,1}} \\
										 &=  P^{\gamma_{1,1}-\gamma_{2,2}+\rho},\label{eq:received_rx_1_att}
		\end{aligned}
	\end{equation}
where $(a)$ comes from the fact that $(\gamma_{2,1}-\gamma_{2,2})^+ = 0$ and $(\gamma_{2,2}-\gamma_{2,1})^+ = \gamma_{2,2}-\gamma_{2,1}$, while $(b)$ comes from the assumption $\gamma_{1,1} \geq \max(\gamma_{1,2},\gamma_{2,1},\gamma_{2,2})$. Also due to Corollary~\ref{cor:received}, the contribution of the interfering symbol $s_2$, lies on the noise floor thanks to the precoding: 
\begin{equation}
	\begin{aligned}				
				\abs{\bh^\He_1{\bt^{\APZF}_{2}}}^2 &\dotleq P^{\tau-1}P^{\min(\gamma_{1,1},\gamma_{1,2})-\alpha^{\prime}_1}\\
							&\overset{(a)}{=} P^{\rho+1-\gamma_{2,2}-1}P^{\gamma_{1,2}-\alpha^{\prime}_1}\\				
							&\overset{(b)}{\leq} P^{0},	\label{eq:interference_rx_1_att}
		\end{aligned}
	\end{equation}
where $(a)$ comes from the power level transmitted $P^\tau = P^{\rho+1-\gamma_{2,2}}$ and the fact that $\gamma_{1,1}\geq\gamma_{1,2}$, while $(b)$ comes from the definition of $\rho$ in \eqref{eq:rho_def} since it holds that  $\rho \leq \gamma_{2,2}-\gamma_{1,2}+\alpha^{\prime}_1$. 
The received signal at RX~$2$ is studied in the same way. Hence
        \begin{align}
              y_2&= \underbrace{\bh^\He_2 \bt^{\BC}s_0}_{\sqrt{\doteq P^{\gamma_{2,2}}}} +  \underbrace{\bh^\He_2 \bt^{\APZF}_2 s_2 }_{\doteq \sqrt{P^{\rho}}}    +  \underbrace{\bh^\He_2 \bt^{\APZF}_1 s_1}_{\doteq \sqrt{P^0}} +  \underbrace{\bh^\He_2 \bt^{z} z_1}_{\doteq \sqrt{P^{0}}}, \label{eq:achiev3c_general2}
		\end{align}
        what can be demonstrated following the same steps as in \eqref{eq:received_rx_1_att} and \eqref{eq:interference_rx_1_att} for $s_1$. Hence
\begin{equation}
	\begin{aligned}
						\abs{\bh^\Herm_2{\bt^{\APZF}_2} }^2   &\doteq  P^{\tau-1} \max(P^{\gamma_{2,1} -(\gamma_{1,1}-\gamma_{1,2})^+}   , P^{\gamma_{2,2} -(\gamma_{1,2}-\gamma_{1,1})^+} )\\
										 &=  P^{\rho+1-\gamma_{2,2}-1} \max(P^{\gamma_{2,1}-(\gamma_{1,1}-\gamma_{1,2})}, P^{\gamma_{2,2}})\\
										 &=  P^{\rho+1-\gamma_{2,2}-1} P^{\gamma_{2,2}} \\
										 &=  P^{\rho}. \label{eq:received_rx_2_att}
		\end{aligned}
	\end{equation}
Focusing on the interfering signal $s_1$, it holds that
\begin{equation}
	\begin{aligned}				
				\abs{\bh^\He_2{\bt^{\APZF}_{1}}}^2 &\dotleq P^{\tau-1}P^{\min(\gamma_{2,1},\gamma_{2,2})-\alpha^{\prime}_2}\\
							&= P^{\rho+1-\gamma_{2,2}-1}P^{\gamma_{2,1}-\alpha^{\prime}_2}\\				
							&\leq P^{0},		\label{eq:received_rx_2_int}
		\end{aligned}
	\end{equation}
where \eqref{eq:received_rx_2_int} comes  as \eqref{eq:interference_rx_1_att} from the definition of $\rho$ in \eqref{eq:rho_def} since it holds that  $\rho \leq \gamma_{2,2}-\gamma_{2,1}+\alpha^{\prime}_2$. ~\\

\FloatBarrier
\subsubsection{Decoding and Achievable GDoF}\label{subsubse:achievability_gdof}
We can see in \eqref{eq:achiev3c_general} that RX~$1$ receives the common symbol $s_{0}$ with a $\SNR$ scaling as $P^{\gamma_{2,2}-\rho}$, treating $s_{1}$ and $z_1$ as noise. After decoding the common symbol and removing its contribution to the received signal, $s_{1}$ can be decoded treating $z_1$ as noise, having it a $\SNR$ that scales as $P^{\rho}$. And finally, $z_1$ is decoded after removing the symbol $s_1$ from the received signal.
Likewise, \eqref{eq:achiev3c_general2} show that for RX~$2$ the common symbol $s_{0}$ has a $\SNR$ scaling as $P^{\gamma_{2,2}-\rho}$, treating $s_{2}$ as noise. After decoding the common symbol and removing its contribution to the received signal, $s_{2}$ can be decoded, having it a $\SNR$ that scales as $P^{\rho}$.

Since symbols that are sent with a rate proportional to the SNR scaling can be decoded with a vanishing error probability, we can decode the common symbol $s_{0}$ with rate $(\gamma_{2,2}-\rho)\log_2(P)$ bits, $s_{1}$ and $s_2$ with rate $\rho\log_2(P)$ bits and $z_1$ with rate $(\gamma_{1,1}-\gamma_{2,2})\log_2(P)$ bits. That allows us to achieve a $\GDoF$ of
\begin{equation}
    \begin{aligned}
    		\GDoF^{DCSIT}(\{\alpha^{(j)}_{i,k}\}_{i,j,k\in\{1,2\}}) 	&= (\gamma_{2,2}-\rho) + (\gamma_{1,1}-\gamma_{2,2}) + 2\rho \\
								&= \rho + \gamma_{1,1} \\
            		&=  \min(\gamma_{1,1} + (\gamma_{2,2}-\gamma_{1,2}+ \alpha^{\prime}_1)^+,\gamma_{2,2} + \gamma_{1,1} -\gamma_{2,1} +  \alpha^{\prime}_2).
    \label{eq:gdof_achievement}
    \end{aligned}
\end{equation}
This corresponds to the GDoF of the Centralized CSIT (See Theorem~\ref{theo:distributed_gdof}).

\subsection{if $\gamma_{2,2} \leq \gamma_{2,1}$}\label{se:achievability_0_2}
In the other case, the sum GDoF expression given in Theorem~\ref{theo:centralized_outerbound} is 
        	\begin{equation}
        	\begin{aligned}
				\GDoF^{DCSIT}(\{\alpha^{(j)}_{i,k}\}_{i,j,k\in\{1,2\}}) &= \min \bigg(\!\gamma_{1,1} + \max \Big( (\gamma_{2,2} - \gamma_{1,2} +  \alpha^{\prime}_1)^+\!,(\gamma_{2,1} - \gamma_{1,1} +  \alpha^{\prime}_1)^+\Big),\bigg. \\ 
                &\bigg. \ \ \ \ \qquad\ \  \gamma_{1,1} + (\gamma_{2,1} - \gamma_{1,1} + \gamma_{1,2} - \gamma_{2,2})^+ +  \alpha^{\prime}_2 \bigg).\label{eq:gdof_2}  
			\end{aligned}
			\end{equation}
Similarly to the previous case, the information sent with power below $P^{1-\gamma_{2,1}}$ lies on the noise floor for RX~$2$. Thus, we transmit to RX~$1$ a non-interfering symbol with power $P^{1-\gamma_{2,1}}$ and rate $(\gamma_{1,1}-\gamma_{2,1})\log_2(P)$ bit. Above it, AP-ZF scheme is used to transmit.~\\

\subsubsection{Transmitted signal}\label{subsubse:achievability_trans_2}
Let us keep the definition of $\rho \in \left[0,1\right]$ as the rate-parameter for the AP-ZF symbols (i.e., the rate is $\rho\log_2(P)$ bits). Omitting the time indices, the transmitted symbols are
			\begin{align}
						\xv =  \bt^{\BC}s_0 + \bt^{\APZF}_1s_1 + \bt^{\APZF}_2s_2 + \bt^{z} z_1 ,
						\label{eq:transmitted_case2}
			\end{align}
where
		\begin{itemize}
				\item $s_0$ is a common symbol of rate $(\gamma_{2,1}-\rho)\log_2(P)$~bits that is decoded at both users. The precoder $\bt^{\BC}$ is the uniform multicast precoder
			\begin{align}
						\bt^{\BC} \triangleq \frac{\sqrt{P-P^{\rho+1-\gamma_{2,1} + \min(\gamma_{1,1}-\gamma_{1,2},\ \gamma_{2,1}-\gamma_{2,2})}-P^{1-\gamma_{2,1}}}}{\sqrt{2}}\left[\begin{array}{c}
                                            1\\
                                            1
                                    \end{array}\right].
						\label{eq:t_bc_def2}
			\end{align} 
				\item $s_i$, $\forall i \in\{1,2\}$ is a AP-ZF symbol intended to the user~$i$ of rate $\rho\log_2(P)$~bits, where
			\begin{equation}
				\begin{aligned}
						\rho &\triangleq \min\LB\max((\gamma_{2,2} - \gamma_{1,2} +  \alpha^{\prime}_1)^+,\ (\gamma_{2,1} - \gamma_{1,1} +  \alpha^{\prime}_1)^+),\right. \\  
                        &\left.\mformtab\ \ \ (\gamma_{2,1} - \gamma_{1,1} + \gamma_{1,2} - \gamma_{2,2})^+ +  \alpha^{\prime}_2 \RB.
						\label{eq:rho_def_case2}
				\end{aligned}
      \end{equation}
			The term $\bt^{\APZF}_i$ is the AP-ZF precoder for RX~$i$. It is transmitted with power 
								\begin{align}
										\normv{\bt^{\APZF}_i}^2\doteq P^{\rho+1-\gamma_{2,1} + \min(\gamma_{1,1}-\gamma_{1,2},\ \gamma_{2,1}-\gamma_{2,2})},
								\end{align}
			where $\normv{\bt^{\APZF}_i}^2$ satisfies that $P^{\rho+1-\gamma_{2,1} + \min(\gamma_{1,1}-\gamma_{1,2},\ \gamma_{2,1}-\gamma_{2,2})} \leq P$.
 				\item $z_1$ is a symbol of rate $(\gamma_{1,1}-\gamma_{2,1})\log_2(P)$~bits that carries a symbol desired by RX~$1$ and it does not generate interference at the other user. $\bt^z$ is the matched precoder with transmitted power $\normv{\bt^{z}}^2 \doteq P^{1-\gamma_{2,1}}$.
		\end{itemize}~\\

\subsubsection{Received signal}\label{subsubse:achievability_dec_2}
 
The decoding is done in the same way. The received signal at RX~$1$ is
		\begin{align}
				y_1&= \underbrace{\bh^\He_1 \bt^{\BC}s_0}_{\doteq \sqrt{P^{\gamma_{1,1}}}} +  \underbrace{\bh^\He_1 \bt^{\APZF}_1 s_1 }_{\doteq \sqrt{P^{\gamma_{1,1} -\gamma_{2,1} + \rho}}}  +  \underbrace{\bh^\He_1 \bt^{z} z_1}_{\doteq \sqrt{^{\gamma_{1,1} -\gamma_{2,1}}}}  +  \underbrace{\bh^\He_1 \bt^{\APZF}_2 s_2}_{\doteq \sqrt{P^0}}, \label{eq:received_user1_case2}
		\end{align}
where the power scale for $s_1$  comes from applying Corollary~\ref{cor:received} with transmitted power $P^\tau = P^{\rho+1-\gamma_{2,1} + \min(\gamma_{1,1}-\gamma_{1,2},\ \gamma_{2,1}-\gamma_{2,2})}$, since it holds that
\begin{equation}
	\begin{aligned}
						\abs{\bh^\Herm_1{\bt^{\APZF}_1} }^2   &\doteq  P^{\tau-1} \max(P^{\gamma_{1,1} -(\gamma_{2,1}-\gamma_{2,2})^+} , P^{\gamma_{1,2} -(\gamma_{2,2}-\gamma_{2,1})^+})\\
										 &\overset{(a)}{=}  P^{\rho+1-\gamma_{2,1} + \min(\gamma_{1,1}-\gamma_{1,2},\gamma_{2,1}-\gamma_{2,2})-1} \max(P^{\gamma_{1,1}-(\gamma_{2,1}-\gamma_{2,2})}, P^{\gamma_{1,2}})\\
										 &{=}  P^{\rho-\gamma_{2,1} + \min(\gamma_{1,1}-\gamma_{1,2},\ \gamma_{2,1}-\gamma_{2,2}) + \max(\gamma_{1,1}-\gamma_{2,1}+\gamma_{2,2}, \gamma_{1,2})}\\
										 &=  P^{\gamma_{1,1}-\gamma_{2,1}+\rho}, \label{eq:received_rx_1_att_case2}
		\end{aligned}
	\end{equation}
where $(a)$ comes from the fact that $(\gamma_{2,1}-\gamma_{2,2})^+ = \gamma_{2,1}-\gamma_{2,2}$ and $(\gamma_{2,2}-\gamma_{2,1})^+ = 0$. Focusing on the interference cancellation in \eqref{eq:received_user1_case2}, it holds that
\begin{equation}
	\begin{aligned}				
				\abs{\bh^\He_1{\bt^{\APZF}_{2}}}^2 &\dotleq P^{\tau-1}P^{\min(\gamma_{1,1},\gamma_{1,2})-\alpha^{\prime}_1}\\
							&= P^{\rho+1-\gamma_{2,1} + \min(\gamma_{1,1}-\gamma_{1,2},\gamma_{2,1}-\gamma_{2,2})-1}P^{\gamma_{1,2}-\alpha^{\prime}_1}\\				
							&\leq P^{0},		\label{eq:interference_rx_1_case2}
		\end{aligned}
	\end{equation}
where \eqref{eq:interference_rx_1_case2} comes  from the definition of $\rho$ in \eqref{eq:rho_def_case2} since it holds that  $\rho \leq \gamma_{2,1} - \min(\gamma_{1,1}-\gamma_{1,2},\gamma_{2,1}-\gamma_{2,2})-\gamma_{1,2}+\alpha^{\prime}_1$. 
The received signal at RX~$2$ is studied in the same way. Hence
        \begin{align}
              y_2&= \underbrace{\bh^\He_2 \bt^{\BC}s_0}_{\sqrt{\doteq P^{\gamma_{2,1}}}} +  \underbrace{\bh^\He_2 \bt^{\APZF}_2 s_2 }_{\doteq \sqrt{P^{\rho}}}    +  \underbrace{\bh^\He_2 \bt^{\APZF}_1 s_1}_{\doteq \sqrt{P^0}} +  \underbrace{\bh^\He_2 \bt^{z} z_1}_{\doteq \sqrt{P^{0}}}, \label{eq:achiev3c_general2_case2}
		\end{align}
and, similarly to the previous case, it holds that
\begin{equation}
	\begin{aligned}
						\abs{\bh^\Herm_2{\bt^{\APZF}_2} }^2   &\doteq  P^{\tau-1} \max(P^{\gamma_{2,1} -(\gamma_{1,1}-\gamma_{1,2})^+}   , P^{\gamma_{2,2} -(\gamma_{1,2}-\gamma_{1,1})^+} )\\
										 &=  P^{\rho+1-\gamma_{2,1} + \min(\gamma_{1,1}-\gamma_{1,2},\gamma_{2,1}-\gamma_{2,2})-1} \max(P^{\gamma_{2,1}-(\gamma_{1,1}-\gamma_{1,2})}, P^{\gamma_{2,2}})\\
										 &\overset{(a)}{=}  P^{\rho-\gamma_{2,1} + \min(\gamma_{1,1}-\gamma_{1,2},\gamma_{2,1}-\gamma_{2,2}) +\max(\gamma_{2,1}-\gamma_{1,1}+\gamma_{1,2},\gamma_{2,2}})\\
										 &=  P^{\rho}, \label{eq:received_rx_2_att_2}
		\end{aligned}
	\end{equation}
Focusing on the interfering signal $s_1$, it holds that
\begin{equation}
	\begin{aligned}				
				\abs{\bh^\He_2{\bt^{\APZF}_{1}}}^2 &\dotleq P^{\tau-1}P^{\min(\gamma_{2,1},\gamma_{2,2})-\alpha^{\prime}_2}\\
							&= P^{\rho+1-\gamma_{2,1} + \min(\gamma_{1,1}-\gamma_{1,2},\gamma_{2,1}-\gamma_{2,2})-1}P^{\gamma_{2,1}-\alpha^{\prime}_2}\\				
							&\leq P^{0},		\label{eq:received_rx_2_int_2}
		\end{aligned}
	\end{equation}
where \eqref{eq:received_rx_2_int_2} comes as \eqref{eq:interference_rx_1_case2} from the definition of $\rho$ in \eqref{eq:rho_def_case2} since it holds that  $\rho \leq \gamma_{2,1} - \min(\gamma_{1,1}-\gamma_{1,2},\gamma_{2,1}-\gamma_{2,2})-\gamma_{2,1}+\alpha^{\prime}_2$.~\\
\subsubsection{Decoding and Achievable GDoF}\label{subsubse:achievability_gdof_2}
Hence, from \eqref{eq:received_user1_case2}, the common symbol $s_{0}$ has a $\SNR$ scaling as $P^{\gamma_{2,1}-\rho}$, treating $s_{1}$ and $z_1$ as noise. After decoding the common symbol and removing its contribution to the received signal, $s_{1}$ can be decoded treating $z_1$ as noise, having it a $\SNR$ that scales as $P^{\rho}$. And finally, $z_1$ is decoded after removing the symbol $s_1$ from the received signal. In the same way, from \eqref{eq:achiev3c_general2_case2}, the common symbol $s_{0}$ has a $\SNR$ scaling as $P^{\gamma_{2,1}-\rho}$, treating $s_{2}$ as noise. After decoding the common symbol and removing its contribution to the received signal, $s_{2}$ can be decoded, having it a $\SNR$ that scales as $P^{\rho}$.~\\

Similarly to the previous case in Section~\ref{subsubse:achievability_gdof}, we can decode each symbol with a rate proportional to the SNR scaling and hence the common symbol $s_{0}$ can be decoded with rate $(\gamma_{2,1}-\rho)\log_2(P)$ bits, $s_{1}$ and $s_2$ with rate $\rho\log_2(P)$ bits and $z_1$ with rate $(\gamma_{1,1}-\gamma_{2,1})\log_2(P)$ bits. That allows us to achieve a $\GDoF$ of
\begin{equation}
    \begin{aligned}
    		\GDoF^{DCSIT}(\{\alpha^{(j)}_{i,k}\}_{i,j,k\in\{1,2\}}) 	&=(\gamma_{2,1}-\rho) + (\gamma_{1,1}-\gamma_{2,1})  + 2\rho \\
		            &= \gamma_{1,1} + \rho \\
								&=  \min \LB \gamma_{1,1} + \max\left( (\gamma_{2,2} - \gamma_{1,2} +  \alpha^{\prime}_1)^+,(\gamma_{2,1} - \gamma_{1,1} +  \alpha^{\prime}_1)^+\right),\right. \\ 
                   &\left. \mformtab\ \  \gamma_{1,1} + (\gamma_{2,1} - \gamma_{1,1} + \gamma_{1,2} - \gamma_{2,2})^+ +  \alpha^{\prime}_2 \RB.
    \label{eq:gdof_achievement_2}
    \end{aligned}
\end{equation}
This corresponds to the GDoF of the Centralized CSIT (See Theorem~\ref{theo:distributed_gdof}) which concludes the proof.
\end{appendices}

\bibliographystyle{IEEEtran}
\bibliography{./Literature}
\end{document}